\newtheorem{theorem}{Theorem}
\newtheorem{lemma}{Lemma}
\newtheorem{definition}{Definition}
\newtheorem{assumption}{Assumption}
\begin{document}
\preprint{RESCEU-22/23}
\title{Quantum algorithm for the Vlasov simulation of the large-scale structure formation with massive neutrinos}
\author{Koichi Miyamoto}
\email{miyamoto.kouichi.qiqb@osaka-u.ac.jp}
\affiliation{Center for Quantum Information and Quantum Biology, Osaka University, Toyonaka, Osaka 560-0043, Japan}
\author{Soichiro Yamazaki}
\affiliation{Department of Physics, Graduate School of Science, The University of Tokyo, Bunkyo-ku, Tokyo 113-0033, Japan}
\author{Fumio Uchida}
\affiliation{Department of Physics, Graduate School of Science, The University of Tokyo, Bunkyo-ku, Tokyo 113-0033, Japan}
\affiliation{Research Center for the Early Universe, Graduate School of Science, The University of Tokyo, Bunkyo-ku, Tokyo 113-0033, Japan}
\author{Kotaro Fujisawa}
\affiliation{Department of Liberal Arts, Tokyo University of Technology, Ota-ku, Tokyo 144-0051, Japan}
\affiliation{Research Center for the Early Universe, Graduate School of Science, The University of Tokyo, Bunkyo-ku, Tokyo 113-0033, Japan}
\author{Naoki Yoshida}
\affiliation{Department of Physics, Graduate School of Science, The University of Tokyo, Bunkyo-ku, Tokyo 113-0033, Japan}
\affiliation{Kavli IPMU (WPI), The University of Tokyo Institutes for Advanced Study (UTIAS),
The University of Tokyo, Chiba 277-8583, Japan}

\date{\today}

\begin{abstract}
Investigating the cosmological implication of the fact that neutrino has finite mass is of importance for fundamental physics.
In particular, massive neutrino affects the formation of the large-scale structure (LSS) of the universe, and, conversely, observations of the LSS can give constraints on the neutrino mass.
Numerical simulations of the LSS formation including massive neutrino along with conventional cold dark matter is thus an important task.
For this, calculating the neutrino distribution in the phase space by solving the Vlasov equation is a suitable approach, but it requires solving the PDE in the $(6+1)$-dimensional space and is thus computationally demanding: Configuring $n_\mathrm{gr}$ grid points in each coordinate and $n_t$ time grid points leads to $O(n_\mathrm{gr}^6)$ memory space and $O(n_tn_\mathrm{gr}^6)$ queries to the coefficients in the discretized PDE.
We propose a quantum algorithm for this task.
Linearizing the Vlasov equation by neglecting the relatively weak self-gravity of the neutrino, we perform the Hamiltonian simulation to produce quantum states that encode the phase space distribution of neutrino.
We also propose a way to extract the power spectrum of the neutrino density perturbations as classical data from the quantum state by quantum amplitude estimation with accuracy $\epsilon$ and query complexity of order $\widetilde{O}((n_\mathrm{gr} + n_t)/\epsilon)$.
Our method also reduces the space complexity to $O(\mathrm{polylog}(n_\mathrm{gr}/\epsilon))$ in terms of the qubit number, while using quantum random access memories with $O(n_\mathrm{gr}^3)$ entries.
As far as we know, this is the first quantum algorithm for the LSS simulation that outputs the quantity of practical interest with guaranteed accuracy.

\end{abstract}

\maketitle

\section{Introduction}

\subsection{Background}

{\it Quantum computing} is an emerging technology and has the potential to speedup numerical tasks intractable by {\it classical computers}, today's ordinary computers including supercomputers.
Witnessing the recent rapid advance of quantum computing, people are now trying to find use cases of quantum computers with the quantum advantage in various fields.
In this paper, we focus on the simulation of the {\it large-scale structure (LSS)} of the universe with {\it massive neutrino}, an important task in cosmology.

In the standard cosmological model, all the rich structures of the present-day Universe formed through gravitational instability of tiny density fluctuations seeded in the
early universe \cite{dodelson2020modern}.
The structure at the largest scale probed by cosmological observations is called the LSS.
The evolution and the resultant LSS have been shaped by the nature of the mysterious
constituents of the universe.
Interestingly, the major components of the universe are largely unidentified. In terms of the energy fraction, about 69\% is contributed by the so-called {\it dark energy}, perhaps some unknown form of energy different from matter, and about 26\% is {\it dark matter (DM)}, which is often thought to be unknown elementary particles that are not predicted to exist
in the Standard Model (SM) of particle physics \cite{Planck2018}.
Observations of the cosmic LSS can shed light on the nature of such dark components and, eventually, provide an important clue on physics beyond the SM.

An important issue toward a better understanding of particle physics through the LSS
is that neutrinos have finite masses.
Neutrinos are massless particles in the SM,
but the detection of neutrino flavor oscillation \cite{Fukuda1998} has now established that they have nonzero mass.
The current constraint from the neutrino oscillation experiments is given as the lower bound on the neutrino mass.
Although the estimated mass is much smaller than other SM particles such as electron with $0.51\mathrm{MeV}$, the nonzero neutrino mass is definite evidence that there is physics beyond the SM.
Intriguingly, astronomical observations of the cosmic LSS provide independent constraints on the neutrino mass\footnote{We also note that there are some researches for constraining the neutrino mass through other types of particle physics experiments or astronomical observations \cite{PAGLIAROLI2010287,KATRIN:2021uub,DiValentino2021,Abe2023}.}.
Neutrinos produced in the early universe exist even today as {\it relics}, and constitute a part of matter component, along with {\it cold dark matter (CDM)}, the conventional picture of DM as particles with much larger mass, e.g., TeV order for supersymmetric particles.
In the LSS formation, neutrino behaves differently from CDM and ordinary matter (baryons), which we hereafter call CDM collectively since they behave similarly under the action of mutual gravity.
Neutrinos have non-relativistic but extremely large velocities and thus stream almost freely in the gravitational potential of CDM. Hence neutrinos used to be thought as {\it hot dark matter (HDM)} distinguishing from CDM.
The distribution and gravitational dynamics of neutrinos affect the formation of LSS differently from the conventional picture with CDM only.
This gives hope that, by comparing the theoretical prediction on the LSS with massive neutrino to the results of cosmological observations, one can derive constraints on the neutrino mass.
We also note that some particle physics models beyond the SM predict other light elementary particles such as axion \cite{Weinberg1978,Wilczek1978,Abbott:1982af,Preskill:1982cy,Dine:1982ah}, and they can behave as HDM and be constrained from LSS observations too \cite{MARSH2016}.

This background gives us a strong motivation for the numerical simulation of the LSS with massive neutrino.
However, it is a computationally demanding task.
$N$-body simulations are often chosen as a powerful method for the study of LSS, where we employ a large number of particles and follow their gravitational dynamics. 
Massive neutrinos have been also incorporated into this method, by either adopting approximate corrections or by directly represented by ``light'' particles \cite{Brandbyge_2009,Brandbyge_2010_1,Brandbyge_2010_2,Viel_2010,Bird2012,Villaescusa-Navarro_2014,Castorina_2014,Inman2015,Banerjee_2018,Villaescusa-Navarro_2020}.
However, there remains a concern that such $N$-body simulations with massive neutrino may lead to imprecise results because, unlike the conventional matter components, CDM and baryons, neutrinos have typically a much larger velocity dispersion, and the so-called shot noise can be significant unless an extremely large number of simulation particles
are employed.

An alternative approach is the Vlasov simulation, that is, solving directly the collisionless Boltzmann equation, also known as the Vlasov equation\footnote{Here, we do not explicitly take into account cosmic expansion and hereafter we consider integration over a sufficiently short time
so that the expansion parameter does not vary significantly. We leave the formulation including cosmic expansion as our future work.}:
\begin{equation}
    \frac{\partial f}{\partial t}(t,\mathbf{x},\mathbf{v})+\mathbf{v}\cdot\frac{\partial f}{\partial \mathbf{x}}(t,\mathbf{x},\mathbf{v})+\mathbf{F}(t,\mathbf{x})\cdot\frac{\partial f}{\partial \mathbf{v}}(t,\mathbf{x},\mathbf{v})=0,
    \label{eq:Vlasov}
\end{equation}
where $f(t,\mathbf{x},\mathbf{v})$ is the neutrino's distribution function in the 6D phase space consisting of the 3D position $\mathbf{x}$ and the 3D velocity $\mathbf{v}$ at time $t$, and $\mathbf{F}(t,\mathbf{x})$ is the gravitational force per unit mass on a neutrino at position $\mathbf{x}$ at time $t$.
However, solving this is also a challenging task.
Since there is no exact analytic solution of Eq.~(\ref{eq:Vlasov}) in realistic settings, we need to resort to some numerical method.
A straightforward way to solve a partial differential equation (PDE) is discretization: Setting grid points in the phase space, we can convert the PDE to a linear system of ordinary differential equations (ODEs), and then apply some algorithm for solving ODE.
This approach, though, has limited feasibility because of the so-called {\it curse of dimensionality}.
If we take $n_\mathrm{gr}$ grid points in one direction, then the total grid number in the 6D phase space becomes $n_\mathrm{gr}^6$ and so does the dimension of the ODE.
Then, in the time integration of the ODE, the memory space used is of order $O(n_\mathrm{gr}^6)$, and if we take $n_t$ time steps in the integration, then $O(n_\mathrm{gr}^6n_t)$ queries to the coefficients in the ODE are made in total.
This sixth-order scaling makes the Vlasov simulation heavier than the $N$-body simulation, where we deal with equations of motion in the 3D space. 
Although there are some studies that try to solve Eq.~(\ref{eq:Vlasov}) by supercomputers \cite{Yoshikawa_2020,Yoshikawa2021,Yoshikawa2023}, the room to increase the grid number to improve accuracy is limited.

\subsection{Our contribution}

Motivated by the above things, in this paper, we propose the quantum algorithm for the Vlasov simulation of neutrino run on a fault-tolerant quantum computer (FTQC).

Our first key observation is that we can reduce Eq.~(\ref{eq:Vlasov}) to the form to which {\it Hamiltonian simulation} \cite{Lloyd1996,berry2007efficient,childs2010relationship,Childs2011,childs2012hamiltonian,Berry2014,Berry2015Hamiltonian,Berry2015Simulating,Low2017HSimQSP,Low2019hamiltonian} can be applied.
At first glance, applying a quantum algorithm to Eq.~(\ref{eq:Vlasov}) seems difficult, since it is a nonlinear PDE.
$\mathbf{F}$ is composed of not only the gravity from CDM but also the self-gravity of neutrino and thus depends on $f$, which makes Eq.~(\ref{eq:Vlasov}) nonlinear.
Because of the unitarity of quantum operations, using quantum computing for nonlinear problems is not straightforward.
In fact, for solving differential equations, most existing quantum algorithms are dedicated to linear ones \cite{Cao_2013,Berry2014highorder,Montanaro2016,berry2017quantum,Costa2019,Xin2020,childs2020quantum,wang2020quantum,Childs2021highprecision,linden2022quantum,an2022theory,berry2022quantum,jin2022quantumsimulation,Krovi2023improvedquantum,Fang2023timemarchingbased,hu2023quantum,jin2023quantum,bagherimehrab2023fast,WEI2023494}, and those for nonlinear ones have application conditions such as the nonlinear term needing to be small in some sense \cite{lloyd2020quantum,Liu2021,Xue_2021,an2022efficient,jin2022quantum,Krovi2023improvedquantum,JIN2023112149,tanaka2023quantum}.
Nevertheless, we can approximately transform Eq. (\ref{eq:Vlasov}) into a linear PDE as follows.
Since neutrino accounts for a much smaller fraction than CDM, we can neglect the gravity from neutrino to itself and CDM unless the neutrino density is extremely non-uniform.
Then, we obtain the gravitational field $\mathbf{F}_\mathrm{CDM}(t,\mathbf{x})$ by CDM using e.g., the $N$-body simulation in advance, and approximate $\mathbf{F}(t,\mathbf{x})$ with $\mathbf{F}_\mathrm{CDM}(t,\mathbf{x})$.
The resulting linear PDE can be transformed into the ODE by discretization.
Importantly, the ODE has an antisymmetric coefficient matrix $A$ and thus is considered as a Schr\"{o}dinger equation with the Hermitian $iA$ being the Hamiltonian.
We then apply the Hamiltonian simulation, a methodology to generate a quantum state evolved under a given Hamiltonian, and yield the quantum state $\ket{\mathbf{f}(T)}$ encoding the value of $f$ on the grid points at the terminal time $T$ in the amplitudes.
As seen later, this takes the $\widetilde{O}(n_\mathrm{gr}+n_t)$\footnote{$\widetilde{O}(\cdots)$ hides the logarithmic factors in the Landau's big-O notation.} complexity in terms of the number of queries to the oracle to access the entries in $A$, which indicates a large speedup from $O(n_\mathrm{gr}^6n_t)$.

It should be noted that extracting some quantities of interest from the quantum state encoding $f$ in the amplitudes is another issue \cite{aaronson2015read}.
Then, we also present how to obtain a typical target quantity in the LSS simulation, the power spectrum $P_\nu(k)$ of the neutrino density perturbation, which indicates the magnitude of the perturbation at the specified scale $k$, from the quantum state.
This is done by some additional unitary operations on $\ket{\mathbf{f}(T)}$ and quantum amplitude estimation (QAE) \cite{brassard2002}.
In total, the proposed method outputs an estimate on the power spectrum of accuracy $\epsilon$ with $\widetilde{O}\left((n_\mathrm{gr}+n_t)/\epsilon\right)$ query complexity.

We also note that the proposed quantum algorithm can provide the advantage on space complexity, too.
In the above task for calculating $P_\nu(k)$, our algorithm uses $O(\log^{5/2}(n_\mathrm{gr}/\epsilon))$ qubits in total, which is exponentially smaller than the $O(n_\mathrm{gr}^6)$ memory space in the aforementioned classical method.
However, our method uses some quantum random access memories (QRAMs) \cite{giovannetti2008} with $O(n_\mathrm{gr}^3)$ entries to store the precomputed CDM gravitational field $\mathbf{F}_\mathrm{CDM}(t,\mathbf{x})$ on the grid points.

In addition, the proposed method can cope with the following complication in the practical problem setting.
Reflecting the stochastic nature of the initial value of the perturbation, $P_\nu(k)$ is defined as an ensemble average.
In the classical LSS simulation, it is estimated through multiple runs of the $N$-body or Vlasov simulation with different initial conditions.
In our quantum method, we do not need multiple runs: We can generate the quantum state that encodes the results from different initial conditions in superposition, and estimate $P_\nu(k)$ with the single quantum state.

\subsection{Comparison to previous studies}

Quantum algorithms for solving the Vlasov equation have been considered in previous studies.
Most of them give discussions in the context of plasma physics, and there has been no study focusing on the gravitational LSS simulation with massive neutrino as far as we know.
Also in the technical aspect, the existing studies are in directions different from ours.

Refs.~\cite{Engel2019,Ameri2023,toyoizumi2023hamiltonian} presented FTQC algorithms to solve the Vlasov--Poisson equation, in which the force induced by the particles themselves is considered, in the context of plasma physics.
They considered the linearized Vlasov--Poisson equation, which describes the perturbative solution on the zeroth-order analytical solution.
Compared to this, our approach for linearization, which approximates the force field by only that from CDM based on the small neutrino mass fraction, is different in the following points.
First, the CDM gravity is externally given by, e.g., $N$-body simulation, in which the non-linear dynamics is incorporated, and it is reflected to the neutrino dynamics solved in the current approach.
Second, the condition for our approximation to be valid is that neutrino density non-uniformity is not so large that the neutrino self-gravity is negligible compared to the CDM gravity.
This is different from the condition for the perturbative approach to be valid, the perturbation being smaller than the zeroth-order solution, which is the neutrino background density in this case.
The fractional mass (and energy) density of neutrino is less than one percent of that of CDM,
as given in Eq. (\ref{eq:NeuFrac}) later. Also, recent fully nonlinear simulations show
the maximum local over-density of neutrino is of the order unity \cite{Yu2017,Yoshikawa_2020}.
These values depend slightly on the neutrino mass but remain extremely small compared to those
of CDM whose maximum over-densities 
reach 500-1000 in nonlinear ``halos."
Hence the local gravitational potential is dominated by CDM in most
of the regime of interest.

As a difference from the other aspect, although Refs.~\cite{Engel2019,Ameri2023,toyoizumi2023hamiltonian} also used the Hamiltonian simulation, they worked in the Fourier space instead of working in the position space $\mathbf{x}$ like us.
This does not fit our setting that the position-wise CDM gravity $\mathbf{F}_\mathrm{CDM}(t,\mathbf{x})$ are given.

There are also studies on quantum algorithms to solve the Vlasov equation in the nonlinear form.
Ref.~\cite{Engel2021} considers approaches via some linearization methods such as Carleman linearization \cite{carleman1932application}.
However, like the existing quantum solvers for nonlinear differential equations \cite{lloyd2020quantum,Liu2021,Xue_2021,an2022efficient,jin2022quantum,Krovi2023improvedquantum,JIN2023112149,tanaka2023quantum}, the method in \cite{Engel2021} has some application conditions such as weakness of the nonlinearity.
If such conditions are satisfied, then quantum nonlinear differential equation solvers based on Carleman linearization provide a solution with space and query complexities of the same order as linear ones except for some logarithmic overheads \cite{Krovi2023improvedquantum}.
Ref.~\cite{dodin2021applications} summarizes the prospect of quantum algorithms to solve plasma dynamics in both linearized and non-linear settings, and also mentions the variational quantum algorithms (VQAs) \cite{cerezo2021variational}.
They might be able to run on noisy intermediate-scale quantum (NISQ) devices in the near term, but they are genuinely heuristic.

When it comes to the simulation of self-gravitating systems, in which LSS simulation is included, \cite{Mocz_2021} proposes a VQA to solve the nonlinear governing equation.
Ref.~\cite{cappelli2023vlasov} also presents a VQA, with the fuzzy DM \cite{Hu2000}, a specific scenario for DM, in mind. 
Although their numerical experiment shows a promising result in the proof-of-concept problem, it is unclear whether their methods scale to the larger problem.
\cite{TODOROVA2020109347} proposes an FTQC algorithm for the Vlasov equation based on the reservoir method in the context of fluid dynamics, and 
\cite{yamazaki2023quantum} proposes a similar algorithm for self-gravitating systems.
By their method, the Vlasov equation is simulated by appropriately arranging quantum circuits performing increment and decrement operations.
While classically controlling the arrangement of the quantum circuits, it allows the delegation of 6-dimensional operations to quantum computation.
As a result, their method can reduce the complexity scaling on $n_\mathrm{gr}$ to $O(n_\mathrm{gr}^3)$, and thus our method achieves larger speedup. 

We also comment that the method to extract the power spectrum from the quantum state has not been proposed to our knowledge. 

\subsection{Organization}

The rest of this paper is organized as follows.
In Sec.~\ref{sec:Prel}, we will explain some basics of the LSS simulation and quantum algorithms used in this paper.
Sec.~\ref{sec:OurQAlgo} is the main part, where we will explain each part of our method one by one: discretizing the Vlasov equation, obtaining the solution-encoding quantum state by Hamiltonian simulation, extracting the power spectrum from the quantum state by QAE, coping with the ensemble average, and so on.
To illustrate our algorithm, we present a demonstrative numerical experiment on the Hamiltonian simulation-based time evolution of $f(t,\mathbf{x},\mathbf{v})$ in Sec. \ref{sec:demo}.
Sec.~\ref{sec:summary} summarizes this paper.

\section{Preliminaries \label{sec:Prel}}

\subsection{Notation}

We use $I$ to denote an identity operator.
To avoid ambiguity, we may write it as $I_n$ with $n\in\mathbb{N}$ when its size is $n\times n$.
We define $\mathbb{R}_+$ by the set of all positive real numbers.
For every $n\in\mathbb{N}$, we define $[n]_0:=\{0,1,...,n-1\}$.
If a matrix $A$ has at most $s\in\mathbb{N}$ nonzero elements in each row and each column, then we say that $A$ is $s$-sparse and the sparsity of $A$ is $s$.

For a vector $\mathbf{x}\in\mathbb{C}^n$, $\|x\|$ denotes its Euclidean norm.
For an (unnormalized) quantum state $\ket{\psi}$ on a multi-qubit system, $\|\ket{\psi}\|$ denotes the Euclidean norm of its state vector.
For a matrix $A\in\mathbb{C}^{m \times n}$, $\|A\|$ denotes its spectral norm, and $\|A\|_\mathrm{max}$ denotes its max norm, the maximum of the absolute values of its entries.

For $\epsilon\in\mathbb{R}_+$, we say that $x^\prime\in\mathbb{R}$ is an $\epsilon$-approximation of $x\in\mathbb{R}$, if $|x^\prime-x|\le\epsilon$ holds.
We use $\log$ and $\lg$ for the natural and binary logarithms, respectively.

We label the position of an entry in a vector and the row and column in a matrix with an integer starting from 0. 
For example, we write a vector $\mathbf{v}\in\mathbb{C}^n$ entrywise as $\mathbf{v}=(v_0,v_1,...,v_{n-1})$ and call $v_i$ the $i$-th entry, and a matrix $A\in\mathbb{C}^{m \times n}$ as $A=(a_{ij})_{i\in[m]_0,j\in[n]_0}$ and call $a_{ij}$ the $(i,j)$-th entry. 

We denote by $\mathbf{1}_C$ the indicator function, which takes 1 if the condition $C$ is satisfied and 0 otherwise. 

\subsection{Simulation of large-scale structure formation with massive neutrinos \label{sec:ClLSSSim}}

Here, we briefly explain some basics of the LSS simulation and review recent developments of simulations that include massive neutrinos.

\subsubsection{$N$-body simulation}

$N$-body simulations are often employed to simulate the LSS formation.
In this approach, the mass distribution is represented by a collection of a large but tractable number of {\it superparticles}.
By populating the simulation volume with $N_\mathrm{p}$ superparticles, we numerically solve the time evolution equation for them:
\begin{equation}
    \frac{d^2}{dt^2}\mathbf{x}_i(t) = \mathbf{F}_i(\{\mathbf{x}_j(t)\}),
\end{equation}
where $\mathbf{x}_i(t)$ is the space coordinate of the $i$-th super particle and $\mathbf{F}_i$ is the gravitational force per unit mass on it caused by the other ones.
Then, the important statistical quantities such as the power spectrum of the density perturbation are estimated from the distribution of the superparticles.
When simulating the LSS that consists of multiple species of particles, e.g., CDM and neutrino, one may naively attempt to use
different sets of superparticles for each component.

We need to distribute the particles sufficiently densely in the 3D space in order to estimate the quantities of interest accurately.
A common setting on $N_\mathrm{p}$ is $N_\mathrm{p}=n_\mathrm{p}^3$ with large $n_\mathrm{p}$, say $O(10^2)$, which means $N_\mathrm{p}$ can be of order of million.
Thus, the $N$-body simulation is a rather heavy calculation, which may need a supercomputer in classical computing.
Nevertheless, it is tractable compared to the Vlasov simulation, which deals with the 6D space-velocity phase space.
In particular, it is commonly considered that if the velocity dispersion of the particles is small as is the case for CDM, then the $N$-body simulation yields accurate result.
On the other hand, for particles with large velocity dispersion such as massive neutrino, it is difficult to use a sufficiently large number of superparticles so that the distribution in the velocity space is represented, and often the $N$-body simulation may lead to imprecise results.

\subsubsection{Vlasov simulation for massive neutrinos}

In light of the above issue, a more desirable approach for LSS simulation with massive neutrinos is solving the Vlasov equation (\ref{eq:Vlasov}) directly.
In particular, for neutrino, we can simplify Eq.~(\ref{eq:Vlasov}) using the fact that it accounts for only a small part of the matter: The ratio of the neutrino energy fraction $\Omega_\nu$ to $\Omega_\mathrm{m}$ that of the whole of the matter is \cite{Yoshikawa_2020, Mangano:2005cc}
\begin{equation}
    \frac{\Omega_\nu}{\Omega_\mathrm{m}}=\frac{M_\nu / 93.14 \mathrm{eV}}{\Omega_\mathrm{m}h^2} \simeq 7.6 \times 10^{-3} \times \frac{M_\nu}{0.1\mathrm{eV}},
    \label{eq:NeuFrac}
\end{equation}
where $h$ is the present value of the Hubble parameter in units of $100 \mathrm{km}/ \mathrm{s}$ and we use $\Omega_\mathrm{m} h^2\simeq 0.14$ \cite{Planck2018}.
Based on this fact, we hereafter neglect the self-gravity of neutrino.
Then, Eq.~(\ref{eq:Vlasov}) approximately boils down to the linear equation
\begin{equation}
    \frac{\partial f}{\partial t}(t,\mathbf{x},\mathbf{v})+\mathbf{v}\cdot\frac{\partial f}{\partial \mathbf{x}}(t,\mathbf{x},\mathbf{v})+\mathbf{F}_\mathrm{CDM}(t,\mathbf{x})\cdot\frac{\partial f}{\partial \mathbf{v}}(t,\mathbf{x},\mathbf{v})=0,
    \label{eq:VlasovNeu}
\end{equation}
where
\begin{equation}
\mathbf{F}_\mathrm{CDM}(t,\mathbf{x})=(F_{\mathrm{CDM},x}(t,\mathbf{x}),F_{\mathrm{CDM},y}(t,\mathbf{x}),F_{\mathrm{CDM},z}(t,\mathbf{x}))
\end{equation}
is the gravitational force per unit mass on a neutrino particle at position $\mathbf{x}$ and time $t$ exerted by CDM.
This type of approximation can be found also in previous studies \cite{nascimento2023neutrino}.
We also neglect the gravitational back reaction from neutrino to CDM, and then obtain $\mathbf{F}_\mathrm{CDM}$ from the dynamics of CDM only by, e.g., the $N$-body simulation.
In other words, we consider the neutrino dynamics given the gravity by CDM as an external force field.
This leads to the difference between the equation we try to solve and those in the previous studies on the Vlasov-Poisson equation in plasma physics and LSS simulation \cite{Engel2019,Ameri2023,toyoizumi2023hamiltonian,Engel2021,dodin2021applications,Mocz_2021}.
Although both consider the Vlasov equation (\ref{eq:Vlasov}) for some particles, in the latter, the force field $\mathbf{F}$ is generated by the particles themselves and given through the Poisson equation, and thus the system of the Vlasov and Poisson equations is solved.
On the other hand, in our setting, $\mathbf{F}$ is explicitly given as a known function $\mathbf{F}_\mathrm{CDM}$, and only Eq. (\ref{eq:VlasovNeu}) is solved.

Solving Eq.~(\ref{eq:VlasovNeu}) numerically is still a difficult task.
As explained in Introduction, taking $n_\mathrm{gr}$ grid points in each of the 6 directions in the phase space leads to the total number of the grid points being $n_\mathrm{gr}^6$.
The total grid number rapidly increases with $n_\mathrm{gr}$, and so do the query and space complexity in this approach.
Although there are some studies in this direction\footnote{In \cite{Yoshikawa_2020,Yoshikawa2021,Yoshikawa2023}, taking into account the gravity by neutrino to itself and CDM, the authors performed the $N$-body simulation for CDM and the Vlasov simulation for neutrino in combination, unlike this paper neglecting the neutrino gravity.} that use supercomputers and take hundreds to thousands of grids in each position direction and tens of grids in each velocity direction \cite{Yoshikawa_2020,Yoshikawa2021,Yoshikawa2023}, increasing the grid point number largely is hard to desire in classical computing.

\subsubsection{Neutrino power spectrum}

Once we get the neutrino distribution function $f(T,\mathbf{x},\mathbf{v})$ at the terminal time $T$ of the simulation, we can derive the quantities of interest from it as follows.

A quantity we are typically interested in is the {\it neutrino power spectrum} $P_\nu(k)$.
It is defined as
\begin{equation}
    \Braket{\tilde{\delta}_\nu(T,\mathbf{k})\tilde{\delta}_\nu^*(T,\mathbf{k}^\prime)}=(2\pi)^3P_\nu(k)\delta^{(3)}(\mathbf{k}-\mathbf{k}^\prime).
\end{equation}
Here, $\tilde{\delta}_\nu(t,\mathbf{k}):=\int \delta_\nu(t,\mathbf{x})e^{-i\mathbf{k}\cdot\mathbf{x}} \mathrm{d}^3 \mathbf{x} $ is the Fourier component of the neutrino density perturbation
\begin{equation}
    \delta_\nu(t,\mathbf{x}) := \frac{\rho_\nu(t,\mathbf{x})-\bar{\rho}_\nu(t)}{\bar{\rho}_\nu(t)}
\end{equation}
at time $t$, where
\begin{equation}
    \rho_\nu(t,\mathbf{x}):=\int f(t,\mathbf{x},\mathbf{v}) \mathrm{d}^3 \mathbf{v} 
\end{equation}
is the neutrino density and $\bar{\rho}_\nu(t)$ is its average with respect to position.
$\braket{\cdot}$ denotes the ensemble average with respect to the randomness of the inflationary perturbation, which is assumed to be homogeneous.
$\delta^{(3)}(\cdot)$ denotes the 3D Dirac's delta function.
Because of the assumption that the universe is isotropic, $P_\nu$ depends on $\mathbf{k}$ only through $k:=\|\mathbf{k}\|$ the norm of the wavenumber vector.

Although neutrino and CDM behave as nonrelativistic matters in the present universe, the fluctuations of the neutrino density and the CDM density leave different imprints in cosmological observations.
It is implied that the galaxy number density traces  the mass density which is mostly contributed by CDM \cite{Villaescusa-Navarro_2014,Castorina_2014,MatteoCostanzi_2013,LoVerde2014,Castorina_2015,Villaescusa-Navarro_2018,Vagnozzi_2018,Raccanelli2018}.
The effect of neutrino is imprinted in LSS over a wide range of length scales \cite{dodelson2020modern}. 
Combining the results from, for instance, galaxy surveys and gravitational lensing observations, we can obtain the information on neutrino distribution and on the shape of $P_\nu$.
Comparing this to the result of the Vlasov simulation, we eventually obtain constraints on the neutrino mass.

\subsection{Quantum algorithms \label{sec:qalgo}}

We now introduce some basics of quantum computing and quantum algorithms used as the building blocks in this paper.
For more general basics on quantum computing, we refer to Ref. \cite{nielsen2002}.

\subsubsection{Arithmetic circuits}

In this paper, we consider computation on the system with multiple quantum registers. 
We use the fixed-point binary representation for real numbers and, for each $x\in\mathbb{R}$, we denote by $\ket{x}$ the computational basis state on a quantum register where the bit string corresponds to the binary representation of $x$.
We assume that every register has a sufficient number of qubits and thus neglect errors by finite-precision representation.
For $z=x+iy\in\mathbb{C}$, we define $\ket{z}:=\ket{x}\ket{y}$, and the real and imaginary parts are held on separate registers.

We can perform arithmetic operations on numbers represented on qubits.
For example, we can implement quantum circuits for four basic arithmetic operations: addition $O_{\mathrm{add}}:\ket{a}\ket{b}\ket{0}\mapsto\ket{a}\ket{b}\ket{a+b}$, subtraction $O_{\mathrm{sub}}:\ket{a}\ket{b}\ket{0}\mapsto\ket{a}\ket{b}\ket{a-b}$, multiplication $O_{\mathrm{mul}}:\ket{a}\ket{b}\ket{0}\mapsto\ket{a}\ket{b}\ket{ab}$, and division $O_{\mathrm{div}}:\ket{a}\ket{b}\ket{0}\ket{0}\mapsto\ket{a}\ket{b}\ket{q}\ket{r}$, where $a,b\in\mathbb{Z}$ and $q$ and $r$ are the quotient and remainder of $a/b$.
For concrete implementations, see \cite{MunosCoreas2022} and the references therein.
In the finite-precision binary representation, these operations are immediately extended to those for real numbers, and then complex numbers.
Hereafter, we collectively call these circuits arithmetic circuits.

\subsubsection{Block-encoding}

Block-encoding, which was first introduced in \cite{Low2019hamiltonian}, is a technique to encode a non-unitary matrix as an upper-left block of a unitary matrix.
Overcoming the restriction of the unitarity of quantum circuits, this block-encoding technique makes various types of matrix computation efficiently implementable, and thus is widely used in various quantum algorithms as a fundamental building block. 
We will give the rigorous definition of the block-encoding in Appendix \ref{sec:theorem}.
In some situations, there are some ways to implement the block-encoding unitary.
For example, if we have sparse-access to a matrix $A$, that is, if $A$ is sparse and we have access to oracles that provide positions and values of nonzero entries in $A$, then we can construct the block-encoding of $A$, whose implementation cost is shown as Theorem \ref{th:BlEncSp} in Appendix \ref{sec:theorem}.
The Vlasov simulation applies to this case, since, as we will see later, $A$ in this case corresponds to the sparse matrix resulting from discretizing the Vlasov equation and we can calculate its entries for given $\mathbf{F}_\mathrm{CDM}(t,\mathbf{x})$.

\subsubsection{Hamiltonian simulation}

With $H$ a Hamiltonian of a multi-qubit system and an initial state $\ket{\psi_0}$, the Hamiltonian simulation is the task to generate the quantum state $\ket{\psi_t}$ after the evolution under $H$ for time $t$.
In other words, we aim to implement the time evolution operator $\exp(-iHt)$ as a quantum circuit.
This task has been investigated as a core quantum algorithm for a long time \cite{Lloyd1996,berry2007efficient,childs2012hamiltonian}, and recently it has been shown that, given a block-encoding of $H$, we can implement a quantum circuit for the simulation of $H$ with the optimal query complexity \cite{Low2017HSimQSP,Childs2018Toward,chakraborty2019,Gilyen2019}.
Roughly speaking, we can construct a block-encoding of $\exp(-iHt)$ with $O(\|H\|t)$ queries to a block-encoding of $H$.
The rigorous statement is given as Theorem \ref{th:BlEncExpH} in Appendix \ref{sec:theorem}.

\subsubsection{Quantum amplitude estimation}

Even if we obtain some quantum state by a quantum algorithm such as Hamiltonian simulation, extracting information of interest from the state is another issue \cite{aaronson2015read}.
There is no general prescription unfortunately, and we need to devise a way to extract the quantity we want on a case-by-case basis.
In some cases, the necessary quantity is encoded in the quantum state $\ket{\Phi}$ as the amplitude of a specific basis state, and then we can use the quantum algorithm called QAE \cite{brassard2002} to estimate the amplitude.
Roughly speaking, we can estimate the amplitude with accuracy $\epsilon$, using the quantum circuit to generate $\ket{\Phi}$ $O(1/\epsilon)$ times.
The rigorous statement on the query complexity of QAE is given as Theorem \ref{th:QAE} in Appendix \ref{sec:theorem}.

\subsubsection{Quantum random access memory \label{sec:QRAM}}

As we will see later, in the algorithm we propose, we need to load the data obtained outside the quantum algorithm onto the quantum register.
A QRAM \cite{giovannetti2008} is a device for such an aim.
Specifically, we assume that, given a set of $N$ real numbers $\mathcal{X}=\{x_0,...,x_{N-1}\}$, we can implement the unitary operation on a two-register system like
\begin{equation}
    U_\mathcal{X}\sum_{i=0}^{N-1}\alpha_i\ket{i}\ket{0}=\sum_{i=0}^{N-1}\alpha_i\ket{i}\ket{x_i} \label{eq:QRAM}
\end{equation}
for any $\alpha_0,...,\alpha_{N-1}\in\mathbb{C}$ such that $\sum_{i=0}^{N-1}|\alpha_i|^2=1$.

Originally, a QRAM was proposed as a dedicated device for data loading, consisting of ``atoms" that emit ``photons" and enabling the parallel data loading in superposition in time $O(\log N)$ \cite{giovannetti2008}.
However, some issues on the implementability of such a device have been pointed out, for example, the feasibility of error correction \cite{Arunachalam_2015,jaques2023qram}.
On the other hand, there are some proposals on quantum circuit-based QRAM \cite{low2018trading,park2019circuit,Paler2020,Matteo2020,Hann2021}, into which we can incorporate error correction. 
However, proposed circuit-based implementations need $O(N)$ gates and qubits, which reduces the feasibility for large $N$, although the circuit depth can be logarithmic in $N$.
In any case, implementing a QRAM is highly challenging, and in this paper, we simply assume the availability of the operation (\ref{eq:QRAM}).

\section{Quantum algorithm for Vlasov simulation of massive neutrinos \label{sec:OurQAlgo}}

Now, we present the quantum algorithm for the neutrino Vlasov simulation, including the procedure to estimate the neutrino power spectrum.

\subsection{Discretizing the Vlasov equation \label{sec:Disc}}

Let us start from discretization, which transforms Eq.~(\ref{eq:VlasovNeu}) to a set of ODEs.

We set the range $[0,L]$ for each of the position coordinates $\mathbf{x}=(x,y,z)$ and the range $[-V,V]$ for each of the velocity coordinates $\mathbf{v}=(u,v,w)$ with sufficiently large $L,V\in\mathbb{R}_+$.
The region in the phase space we consider is thus $\mathcal{V}:=[0,L]\times[0,L]\times[0,L]\times[-V,V]\times[-V,V]\times[-V,V]$.
As a common boundary condition, we impose the periodic condition for each position coordinate, and the Dirichlet condition $f=0$ for each velocity coordinate.
Then, we introduce $n_\mathrm{gr}$ grid points in each coordinate: We set
\begin{equation}
    x_{i_x}=i_x\Delta_\mathbf{x}, \,\,\Delta_\mathbf{x}:=\frac{L}{n_\mathrm{gr}},i_x\in[n_\mathrm{gr}]_0
\end{equation}
for $x$, and similarly for $y$ and $z$, and
\begin{equation}
    u_{i_u}=-V+(i_u+1)\Delta_\mathbf{v}, \Delta_\mathbf{v}:=\frac{2V}{n_\mathrm{gr}+1},i_u\in[n_\mathrm{gr}]_0
\end{equation}
for $u$, and similarly for $v$ and $w$.
The grid points in the 6D phase space are
\begin{equation}
    (x_{i_x},y_{i_y},z_{i_z},u_{i_u},v_{i_v},w_{i_w}), i_x,i_y,i_z,i_u,i_v,i_w\in [n_\mathrm{gr}]_0.
\end{equation}
We hereafter label the grid points with vectors of indexes:
\begin{equation}
    \mathbf{i}=(i_x,i_y,i_z,i_u,i_v,i_w) \in\mathcal{I}_6:=\underbrace{[n_\mathrm{gr}]_0\times\cdots\times[n_\mathrm{gr}]_0}_6.
\end{equation}
Sometimes, we also use the label $i\in[N_\mathrm{gr}]_0$, which is related to $\mathbf{i}$ as
\begin{align}
i=\sigma(\mathbf{i})&:=i_x + i_y \times n_\mathrm{gr} + i_z \times n_\mathrm{gr}^2 \nonumber \\
&\quad +i_u \times n_\mathrm{gr}^3+i_v \times n_\mathrm{gr}^4+i_w \times n_\mathrm{gr}^5.
\label{eq:iandVeci}
\end{align}
Here, $N_\mathrm{gr}:=n_\mathrm{gr}^6$ is the total number of the grid points in the phase space.
For the later convenience, we assume that $n_\mathrm{gr}=2^{m_\mathrm{gr}}$ with some $m_\mathrm{gr}\in\mathbb{N}$.

Although we take the same number of grid points in each of position and velocity coordinates just for simplicity, we can take different numbers and such a generalization is straightforward.
In fact, in simulations in \cite{Yoshikawa_2020,Yoshikawa2021,Yoshikawa2023}, the grid number in each position coordinate is different from that in each velocity coordinate. 

Using this grid, we approximate the partial derivative with the central difference, that is,
\begin{equation}
    \frac{\partial}{\partial x} f(t,\mathbf{x}, \mathbf{v}) \simeq \frac{f(t,\mathbf{x}+\Delta_\mathbf{x}\mathbf{e}_x, \mathbf{v})-f(t,\mathbf{x}-\Delta_\mathbf{x}\mathbf{e}_x, \mathbf{v})}{2\Delta_\mathbf{x}},
    \label{eq:centDiff}
\end{equation}
where $\mathbf{e}_x=(1,0,0)$, and similarly for the derivatives with respect to $y,z,u,v$ and $w$.
Then, Eq.~(\ref{eq:VlasovNeu}) is transformed into
\begin{equation}
    \frac{d}{d t}\mathbf{f}(t)=A(t)\mathbf{f}(t).
    \label{eq:EvEq}
\end{equation}
$\mathbf{f}(t)$ is a vector in $\mathbb{R}^{N_\mathrm{gr}}$, which can be seen as $\underbrace{\mathbb{R}^{n_\mathrm{gr}}\times\cdots\times\mathbb{R}^{n_\mathrm{gr}}}_6$.
In light of this, we label each entry in $\mathbf{f}(t)$ with $\mathbf{i}\in\mathcal{I}_6$, as well as $i\in[N_\mathrm{gr}]_0$ in Eq.~(\ref{eq:iandVeci}).
Although we define $\mathbf{f}(t)$ as a solution of Eq.~(\ref{eq:EvEq}) with some initial value, we expect that the $\mathbf{i}$th entry of $\mathbf{f}(t)$ approximates the value of $f$ at time $t$ on the $\mathbf{i}$-th grid point:
\begin{equation}
    f_\mathbf{i}(t) \simeq f(t,x_{i_x},y_{i_y},z_{i_z},u_{i_u},v_{i_v},w_{i_w}).
\end{equation}
$A(t)$ is a $N_\mathrm{gr} \times N_\mathrm{gr}$ matrix and we again label its rows and columns with both $\mathbf{i}\in\mathcal{I}_6$ and $i\in[N_\mathrm{gr}]_0$.
$A(t)$ is written as
\begin{equation}
    A(t)=A_x+A_y+A_z+A_u(t)+A_v(t)+A_w(t).
    \label{eq:A}
\end{equation}
$A_x$ is defined as
\begin{equation}
    A_x = - D_{\mathrm{per}}\otimes I_{n_\mathrm{gr}} \otimes I_{n_\mathrm{gr}}\otimes E_u\otimes I_{n_\mathrm{gr}} \otimes I_{n_\mathrm{gr}},
\end{equation}
where $D_{\mathrm{per}}$ and $E_u$ are $n_\mathrm{gr}\times n_\mathrm{gr}$ matrices defined as
\begin{equation}
    D_{\mathrm{per}} := 
    \begin{pmatrix}
    0  & 1 &   &  &  -1 \\    
    -1 & 0 & 1 &  & \\
    & \ddots & \ddots & \ddots & \\
    & & -1 & 0 & 1 \\
    1 & &  & -1& 0
    \end{pmatrix}
\end{equation}
and
\begin{equation}
    E_u :=
    \begin{pmatrix}
    \frac{u_0}{2\Delta_\mathbf{x}}  &  & \\    
    & \ddots & \\
    & & \frac{u_{n_\mathrm{gr}-1}}{2\Delta_\mathbf{x}}
    \end{pmatrix},
\end{equation}
respectively.
$A_y$ and $A_z$ are defined similarly.
$A_u(t)$ is a time-varying $N_\mathrm{gr} \times N_\mathrm{gr}$ matrix and its $(\mathbf{i},\mathbf{j})$ entry is 
\begin{align}
    &(A_u(t))_{\mathbf{i},\mathbf{j}} = \nonumber \\
    \quad &
    \begin{cases}
        -\frac{F_{\mathrm{CDM},x}(t,x_{i_x},y_{i_y},z_{i_z})}{2\Delta_\mathbf{v}} & ; \ \mathrm{if} \ \mathbf{j} = \mathbf{i} + \mathbf{e}_u \\
        \frac{F_{\mathrm{CDM},x}(t,x_{i_x},y_{i_y},z_{i_z})}{2\Delta_\mathbf{v}} & ; \ \mathrm{if} \ \mathbf{j} = \mathbf{i} - \mathbf{e}_u \\
        0 & ; \ \mathrm{otherwise}
    \end{cases},
\end{align}
where $\mathbf{e}_u:=(0,0,0,1,0,0)$.
$A_v(t)$ and $A_w(t)$ are defined similarly.

We now comment on what are conserved in the time evolution of the ODE system (\ref{eq:EvEq}).
First, $\|\mathbf{f}(t)\|$ the norm of the solution is constant in time, which is seen as follows.
$A$ is antisymmetric, as is easily checked from the definition.
Thus,
\begin{equation}
H(t):=iA(t)
\label{eq:H}
\end{equation}
is Hermitian, and we can regard Eq.~(\ref{eq:EvEq}) as the Schr\"{o}dinger equation with the Hamiltonian $H$.
As a property of the Schr\"{o}dinger equation, $\|\mathbf{f}(t)\|$ is conserved.
This means that in the ODE system (\ref{eq:EvEq}), there is no instability such that the solution explodes.
This is a merit of the current discretization as Eq.~(\ref{eq:centDiff}), which leads to the antisymmetricity of $A$.

Second, $f_\mathrm{sum}(t):=\sum_{\mathbf{i}\in\mathcal{I}_6}f_\mathbf{i}(t)$, the sum of all the entries in $\mathbf{f}(t)$ is approximately conserved.
This is seen by noting that
\begin{align}
    &\frac{d}{dt}f_\mathrm{sum}(t) \nonumber \\
    =& \sum_{\mathbf{i},\mathbf{j}\in\mathcal{I}_6} (A(t))_{\mathbf{i},\mathbf{j}} f_\mathbf{\mathbf{j}}(t)  \nonumber \\
    =& \sum_{\mathbf{i}\in\mathcal{B}^{\mathrm{l}}_u} (A(t))_{\mathbf{i}+\mathbf{e}_u,\mathbf{i}} f_\mathbf{\mathbf{i}}(t) + \sum_{\mathbf{i}\in\mathcal{B}^{\mathrm{u}}_u} (A(t))_{\mathbf{i}-\mathbf{e}_u,\mathbf{i}} f_\mathbf{\mathbf{i}}(t) \nonumber \\
    & + (u \leftrightarrow v) + (u \leftrightarrow w),
\end{align}
where $(u \leftrightarrow v)$ (resp. $(u \leftrightarrow w)$) denotes the same as the first and second terms expect $u$ is replaced with $v$ (resp. $w$), and $\mathcal{B}^{\mathrm{l}}_u:=\left\{\mathbf{i}\in\mathcal{I}_6 \ \middle| \ i_u=0 \right\}$, $\mathcal{B}^{\mathrm{u}}_u:=\left\{\mathbf{i}\in\mathcal{I}_6 \ \middle| \ i_u=n_\mathrm{gr}-1 \right\}$, and similarly defined $\mathcal{B}^{\mathrm{l}}_v,\mathcal{B}^{\mathrm{u}}_v,\mathcal{B}^{\mathrm{l}}_w,\mathcal{B}^{\mathrm{u}}_w$ are the sets of the indexes of the grid points on the boundaries in the velocity coordinates.
That is, the time derivative of $f_\mathrm{sum}$ is almost vanishing except for the contributions from the boundary grids.
The conservation of $f_\mathrm{sum}$ has the physical meaning that the number of the neutrino particles in the 6D simulation box is conserved, which is desirable in the simulation.
The violation of the conservation corresponds to the escapes of the particles from the box, which is controllable by setting the box sufficiently large.

Although these conservation properties are desirable, there might be other properties the solution $\mathbf{f}(t)$ should have from the physical perspective.
For example, since $\mathbf{f}(t)$ denotes the distribution density function in the phase space, it must be nonnegative, but, it is not clear that this is satisfied in the current approach. 
We leave incorporating some schemes guaranteeing the nonnegativity such as upwind difference into the quantum algorithm as future work.

\subsection{Generating the solution-encoding quantum state \label{sec:GenQSHamSim}}

As seen above, we can regard Eq.~(\ref{eq:EvEq}) as the Schr\"{o}dinger equation.
If we rewrite it {\it a la} quantum mechanics, then it becomes
\begin{equation}
    \frac{d}{d t}\ket{\mathbf{f}(t)}=-iH(t)\ket{\mathbf{f}(t)},
    \label{eq:ShEq}
\end{equation}
where $\ket{\mathbf{f}(t)}$ is the quantum state encoding the value of $\mathbf{f}(t)$ in the amplitudes:
\begin{equation}
\ket{\mathbf{f}(t)}:=\frac{1}{\|\mathbf{f}(t)\|}\sum_{\mathbf{i}\in\mathcal{I}_6} f_\mathbf{i}(t) \ket{\mathbf{i}}.
\label{eq:ketft}
\end{equation}
Here, $\ket{\mathbf{i}}:=\ket{i_x}\ket{i_y}\ket{i_z}\ket{i_u}\ket{i_v}\ket{i_w}$, which can be also seen as $\ket{i}$ under the correspondence between $\mathbf{i}$ and $i$ in Eq.~(\ref{eq:iandVeci}), since concatenating the binary representations of $i_x,...,i_w$ yields that of $i$.

Now, let us consider generating the quantum state $\ket{\mathbf{f}(T)}$ encoding $f$ at the terminal time $T$ by Hamiltonian simulation.
To apply this, we make the following assumption.

\begin{assumption}[Piecewise time-constancy of $\mathbf{F}_\mathrm{CDM}$ and oracles to access it]
Let $T\in\mathbb{R}_+$ and $n_t\in\mathbb{N}$.
Let $t_{i_t}=i_t \Delta_t$ for $i_t\in[n_t]_0$ with $\Delta_t:=T/n_t$. 
Then, for any $\mathbf{i}_\mathbf{x}:=(i_x,i_y,i_z)\in\mathcal{I}_3:=[n_\mathrm{gr}]_0 \times [n_\mathrm{gr}]_0 \times [n_\mathrm{gr}]_0$ and $i_t\in[n_t]_0$,
\begin{align}
    F_{\mathrm{CDM},x}(t,\mathbf{x}_{\mathbf{i}_\mathbf{x}}) &=
    F_{\mathrm{CDM},x}^{i_t,\mathbf{i}_\mathbf{x}} \nonumber \\
    F_{\mathrm{CDM},y}(t,\mathbf{x}_{\mathbf{i}_\mathbf{x}}) &=
    F_{\mathrm{CDM},y}^{i_t,\mathbf{i}_\mathbf{x}} \nonumber \\
    F_{\mathrm{CDM},z}(t,\mathbf{x}_{\mathbf{i}_\mathbf{x}}) &=
    F_{\mathrm{CDM},z}^{i_t,\mathbf{i}_\mathbf{x}}
    \label{eq:FCDMConst}
\end{align}
holds for any $t\in[t_{i_t},t_{i_t+1})$.
Here, $\mathbf{x}_{\mathbf{i}_\mathbf{x}}:=(x_{i_x},y_{i_y},z_{i_z})$, and, for each $(i_t,\mathbf{i}_\mathbf{x})\in[n_t]_0 \times \mathcal{I}_3$, $F_{\mathrm{CDM},x}^{i_t,\mathbf{i}_\mathbf{x}},F_{\mathrm{CDM},y}^{i_t,\mathbf{i}_\mathbf{x}}$ and $F_{\mathrm{CDM},z}^{i_t,\mathbf{i}_\mathbf{x}}$ are some real numbers.
Furthermore, for any $i_t\in[n_t]_0$, we are given accesses to the oracles $O^{i_t}_{F_{\mathrm{CDM},x}}$, $O^{i_t}_{F_{\mathrm{CDM},y}}$, and $O^{i_t}_{F_{\mathrm{CDM},z}}$ that act as
\begin{align}
    O^{i_t}_{F_{\mathrm{CDM},x}} \ket{\mathbf{i}_\mathbf{x}}\ket{0} &= \ket{\mathbf{i}_\mathbf{x}}\ket{F_{\mathrm{CDM},x}^{i_t,\mathbf{i}_\mathbf{x}}} \nonumber \\
    O^{i_t}_{F_{\mathrm{CDM},y}} \ket{\mathbf{i}_\mathbf{x}}\ket{0} &= \ket{\mathbf{i}_\mathbf{x}}\ket{F_{\mathrm{CDM},y}^{i_t,\mathbf{i}_\mathbf{x}}} \nonumber \\
    O^{i_t}_{F_{\mathrm{CDM},z}} \ket{\mathbf{i}_\mathbf{x}}\ket{0} &= \ket{\mathbf{i}_\mathbf{x}}\ket{F_{\mathrm{CDM},z}^{i_t,\mathbf{i}_\mathbf{x}}}
    \label{eq:OracleFCDM}
\end{align}
for any $\mathbf{i}_\mathbf{x}\in\mathcal{I}_3$, where $\ket{\mathbf{i}_\mathbf{x}}:=\ket{i_x}\ket{i_y}\ket{i_z}$.
\label{ass:FCDM}
\end{assumption}

The assumption that $\mathbf{F}_\mathrm{CDM}$ is piecewise constant in time matches the practical setting.
From the $N$-body simulation, we obtain $\mathbf{F}_\mathrm{CDM}$ only on the discrete time points, since in the $N$-body simulation we discretize the time for time integration of the equation of motion, and interpolating them as Eq.~(\ref{eq:FCDMConst}) is a common approximation.
We also note that, for this type of $\mathbf{F}_\mathrm{CDM}$, the Hamiltonian $H(t)$ is also piecewise constant in time: for any $i_t\in[n_t]_0$,
\begin{equation}
    H(t) = H_{i_t}:=H(t_{i_t})
    \label{eq:HPieceConst}
\end{equation}
holds for any $t\in[t_{i_t},t_{i_t+1})$.

We also assume that we can prepare the quantum state encoding the initial value of $\mathbf{f}(t)$.

\begin{assumption}
    For a given initial value $\mathbf{f}(0)$, we have an access to the oracle $O_{\mathbf{f}(0)}$ that acts as
    \begin{equation}
        O_{\mathbf{f}(0)}\ket{0}=\ket{\mathbf{f}(0)}.
        \label{eq:Of0}
    \end{equation}
    \label{ass:IniStOracle}
\end{assumption}

We will discuss the implementation of the above oracles in Sec.~\ref{sec:oracle}.

Then, we can apply Hamiltonian simulation to generate $\ket{\mathbf{f}(T)}$.

\begin{theorem}
    Suppose that Assumptions \ref{ass:FCDM} and \ref{ass:IniStOracle} hold.
    Then, for any $\epsilon\in(0,1/2)$, we have a quantum circuit $U_{\mathbf{f}(T),\epsilon}$ on the $(2\lg N_\mathrm{gr}+5)$-qubit system that acts as
    \begin{equation}
        U_{\mathbf{f}(T),\epsilon}\ket{0}^{\otimes(2\lg N_\mathrm{gr}+5)}=\ket{0}^{\otimes(\lg N_\mathrm{gr}+5)}\ket{{\mathbf{f}(T)}} + \ket{\psi_\mathrm{gar}},
        \label{eq:UfT}
    \end{equation}
    where $\ket{\mathbf{f}(T)}$ is the quantum state that encodes the solution of Eq.~(\ref{eq:EvEq}) at time $T$ with the initial value $\mathbf{f}(0)$ as Eq.~(\ref{eq:ketft}), and $\ket{\psi_\mathrm{gar}}$ is some unnormalized quantum state satisfying $\|\ket{\psi_\mathrm{gar}}\|\le\epsilon$.
    In $U_{\mathbf{f}(T),\epsilon}$, (controlled) $\left\{O^{i_t}_{F_{\mathrm{CDM},x}}, O^{i_t}_{F_{\mathrm{CDM},y}}, O^{i_t}_{F_{\mathrm{CDM},z}}\right\}_{i_t}$, arithmetic circuits, and their inverses are queried
    \begin{equation}
        O\left(n_\mathrm{gr}T\times\max\left\{\frac{V}{L},\frac{F_\mathrm{max}}{V}\right\} + n_t\log\left(\frac{n_t}{\epsilon}\right) \right)
        \label{eq:CompfTGen}
    \end{equation}
    times, $O_{\mathbf{f}(0)}$ is queried once, and
    \begin{equation}
        O\left(\log^{5/2}\left(\frac{n_\mathrm{gr}T}{\epsilon}\times\max\left\{\frac{V}{L},\frac{F_\mathrm{max}}{V}\right\} \right) \right)
        \label{eq:QubitfTGen}
    \end{equation}
    qubits are used including ancillary ones.
    \label{th:ketfTGen}
\end{theorem}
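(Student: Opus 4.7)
The plan is to exploit Assumption \ref{ass:FCDM} to treat $H(t)$ as a sequence of $n_t$ time-independent Hamiltonians $H_{i_t}$, each acting for an interval of length $\Delta_t=T/n_t$, and to simulate each interval by the optimal Hamiltonian simulation subroutine (Theorem \ref{th:BlEncExpH}) applied to a block-encoding of $H_{i_t}$ built from sparse access (Theorem \ref{th:BlEncSp}). Concretely, I would first call $O_{\mathbf{f}(0)}$ once to prepare $\ket{\mathbf{f}(0)}$ on the main register of $\lg N_{\mathrm{gr}}=6\lg n_{\mathrm{gr}}$ qubits, then apply the simulation circuits for $H_0,H_1,\dots,H_{n_t-1}$ in sequence with per-step tolerance $\epsilon/n_t$, and absorb the accumulated approximation error into the garbage term $\ket{\psi_{\mathrm{gar}}}$ by the triangle inequality.

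The first technical step is to construct, for each $i_t$, the sparse-access oracles required by Theorem \ref{th:BlEncSp}. From the explicit form of $A$ in Eq.~(\ref{eq:A}) the matrix $H_{i_t}=iA(t_{i_t})$ has at most a constant number of nonzeros per row (two per coordinate direction), so it is $O(1)$-sparse, and from Eq.~(\ref{eq:iandVeci}) the index of each nonzero neighbour of row $\mathbf{i}$ is an arithmetic function of $\mathbf{i}$. The value at such a nonzero is either $\pm u_{i_u}/(2\Delta_{\mathbf{x}})$ (and its cyclic analogues), computable from $\mathbf{i}$ by arithmetic circuits alone, or $\pm F_{\mathrm{CDM},\ast}^{i_t,\mathbf{i}_{\mathbf{x}}}/(2\Delta_{\mathbf{v}})$, computable by one query to the corresponding oracle in Eq.~(\ref{eq:OracleFCDM}) followed by a division. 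Thus each sparse-access query for $H_{i_t}$ costs $O(1)$ oracle and arithmetic queries. Bounding $\|H_{i_t}\|_{\max}\le\max\{V/(2\Delta_{\mathbf{x}}),F_{\max}/(2\Delta_{\mathbf{v}})\}=O(n_{\mathrm{gr}}\max\{V/L,F_{\max}/V\})$ and multiplying by the constant sparsity yields a block-encoding with subnormalization $\alpha=O(n_{\mathrm{gr}}\max\{V/L,F_{\max}/V\})$, which also upper bounds $\|H_{i_t}\|$.

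Feeding this block-encoding into Theorem \ref{th:BlEncExpH} with evolution time $\Delta_t$ and tolerance $\epsilon/n_t$ gives a per-step cost of $O(\alpha\Delta_t+\log(n_t/\epsilon))$ queries to the block-encoding, hence $O(\alpha\Delta_t+\log(n_t/\epsilon))$ queries to $\{O^{i_t}_{F_{\mathrm{CDM},\ast}}\}$ and arithmetic circuits. Summing over $i_t=0,\dots,n_t-1$ telescopes the first term into $\alpha T$, producing the target bound
\begin{equation}
O\!\left(n_{\mathrm{gr}}T\max\!\left\{\tfrac{V}{L},\tfrac{F_{\max}}{V}\right\}+n_t\log\!\tfrac{n_t}{\epsilon}\right)
\end{equation}
and matching Eq.~(\ref{eq:CompfTGen}). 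The qubit count is $\lg N_{\mathrm{gr}}$ for the solution register plus $O(\lg N_{\mathrm{gr}})$ sparse-access ancillae, dominated by the $\log^{5/2}$-scaling ancilla requirement of the QSP-based Hamiltonian simulation in Theorem \ref{th:BlEncExpH} with the argument $\alpha T/\epsilon$, giving Eq.~(\ref{eq:QubitfTGen}).

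The main obstacle is bookkeeping rather than a single conceptual hurdle: (i) verifying that distributing the overall error budget as $\epsilon/n_t$ per step produces the advertised additive $\log(n_t/\epsilon)$ cost in Eq.~(\ref{eq:CompfTGen}) without blowing up constants; (ii) confirming that the $O(1)$ sparsity and arithmetic cost of evaluating entries of $H_{i_t}$ really makes the number of calls to each $O^{i_t}_{F_{\mathrm{CDM},\ast}}$ proportional to the per-step query count of Theorem \ref{th:BlEncExpH}, so that the $\alpha T$ scaling is not contaminated by a hidden $n_t$ factor; and (iii) checking that the ancilla register for the block-encoding of $H_{i_t}$ has size $O(\log N_{\mathrm{gr}})$ and is therefore dominated by the $\log^{5/2}$ contribution in Eq.~(\ref{eq:QubitfTGen}), so that the total qubit count takes the stated form.
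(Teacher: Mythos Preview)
Your proposal is correct and follows essentially the same approach as the paper: build sparse-access oracles for each $H_{i_t}$ from the force oracles with $O(1)$ overhead, block-encode via Theorem~\ref{th:BlEncSp}, simulate each step via Theorem~\ref{th:BlEncExpH} with tolerance $\epsilon/n_t$, and chain the $n_t$ steps using the triangle inequality. One small correction: the $\log^{5/2}$ ancilla count in Eq.~(\ref{eq:QubitfTGen}) comes from the block-encoding construction in Theorem~\ref{th:BlEncSp} (with block-encoding error $\epsilon/(2T)$ so that the per-step simulation error is $\epsilon/n_t$), not from Theorem~\ref{th:BlEncExpH}, which only requires $O(1)$ additional ancillae.
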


The proof of this theorem is given in Appendix \ref{sec:PrThKetfT}.
Although we postpone the details of construction of $U_{\mathbf{f}(T),\epsilon}$, we now give a rough outline.
For the piecewise constant $H(t)$, the solution of Eq. (\ref{eq:ShEq}) is
\begin{equation}
    \ket{\mathbf{f}(T)}=\exp\left(-i \Delta_t H_{n_t-1}\right) \cdots \exp\left(-i \Delta_t H_{0}\right)\ket{\mathbf{f}(0)}.
    \label{eq:fTExp}
\end{equation}
We generate this state via implementing $\exp\left(-i \Delta_t H_{i_t}\right)$ by block-encoding, which is possible with the oracles $O^{i_t}_{F_{\mathrm{CDM},x}}$, etc.

We can simplify the query complexity bound (\ref{eq:CompfTGen}) by taking into account how to set $L$ and $V$ in practice.
To solve the Vlasov equation precisely, we set the considered region $\mathcal{V}$ in the phase space so that it contains the region where the distribution function $f$ takes a nonnegligible value.
In other words, noting that $f$ reflects the neutrino motion, we set $\mathcal{V}$ so that no particle escapes from it.
For this purpose, it is sufficient to set $L$ and $V$, the side lengths of $\mathcal{V}$, as
\begin{equation}
    VT \sim L,F_\mathrm{max}T \sim V,
    \label{eq:LVSuff}
\end{equation}
since in the simulation time $T$ a particle traverses at most $VT$ in the position coordinate and at most $F_\mathrm{max}T$ in the velocity coordinate.
Conversely, much larger values of $L$ and $V$ than these are too much.
Thus, assuming Eq.~(\ref{eq:LVSuff}), we can simplify Eq. (\ref{eq:CompfTGen}) as
\begin{equation}
    \widetilde{O}(n_\mathrm{gr}+n_t),
\end{equation}
as announced in Introduction.
Besides, Eq. (\ref{eq:QubitfTGen}) becomes
\begin{equation}
    O\left(\log^{5/2}\left(\frac{n_\mathrm{gr}}{\epsilon}\right)\right),
    \label{eq:qubitNumSimple}
\end{equation}
which means that the number of qubits used is only logarithmic.

\subsection{Extracting the power spectra \label{sec:Extract}}

\begin{figure*}[tp]
    \centering
    \includegraphics[width=1\linewidth]{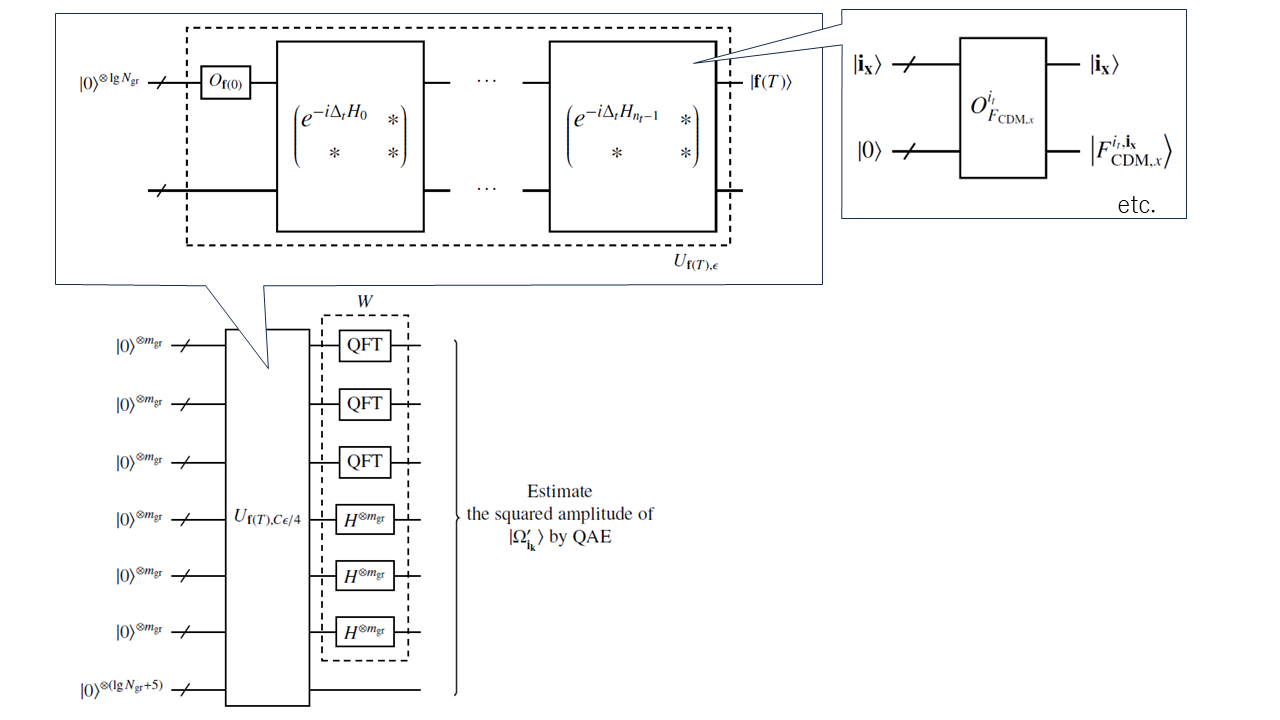}
    \caption{Overview of the quantum circuit used in Algorithm \ref{alg:PnuEstim}.}
    \label{fig:CircuitOverview}
\end{figure*}

Even though we can generate the quantum state $\ket{\mathbf{f}(T)}$ that encodes $\mathbf{f}(T)$, extracting some information of interest from it is another issue.
Here, we consider how to estimate the neutrino power spectrum from the quantum state.
In the current setting of finite volume and discrete grid points, the power spectrum is given as \cite{Jing_2005}
\begin{equation}
    P_\nu(\mathbf{k}_{\mathbf{i}_\mathbf{k}})=\left\langle \left| \tilde{\delta}^\nu_{\mathbf{i}_\mathbf{k}} \right|^2 \right\rangle
    \label{eq:PnuDisc}
\end{equation}
for $\mathbf{k}_{\mathbf{i}_\mathbf{k}}=\frac{2\pi}{L}\mathbf{i}_\mathbf{k}$ with $\mathbf{i}_\mathbf{k}\in\mathcal{I}_3$.
Here,
\begin{align}
    \tilde{\delta}^\nu_{\mathbf{i}_\mathbf{k}}
    &:=\frac{1}{n_\mathrm{gr}^3}\sum_{\mathbf{i}_\mathbf{x}\in\mathcal{I}_3} \delta^\nu_{\mathbf{i}_\mathbf{x}} \exp(i\mathbf{k}_{\mathbf{i}_\mathbf{k}}\cdot\mathbf{x}_{\mathbf{i}_\mathbf{x}})  \nonumber \\
    &= \frac{1}{n_\mathrm{gr}^3}\sum_{\mathbf{i}_\mathbf{x}\in\mathcal{I}_3}  \delta^\nu_{\mathbf{i}_\mathbf{x}} \exp\left(\frac{2 \pi i \mathbf{i}_\mathbf{k}\cdot  \mathbf{i}_\mathbf{x}}{n_\mathrm{gr}}\right)
    \label{eq:tilDelNu}
\end{align}
is the discrete Fourier transform of
\begin{equation}
    \delta^\nu_{\mathbf{i}_\mathbf{x}}:=\frac{\rho^\nu_{\mathbf{i}_\mathbf{x}}-\bar{\rho}_\nu(T)}{\bar{\rho}_\nu(T)},
    \label{eq:delNu}
\end{equation}
the neutrino density perturbation at $\mathbf{x}_{\mathbf{i}_\mathbf{x}}$ the grid point in the position space.
\begin{equation}
    \rho^\nu_{\mathbf{i}_\mathbf{x}}:=\sum_{\mathbf{i}_\mathbf{v}\in\mathcal{I}_3} f_{(\mathbf{i}_\mathbf{x},\mathbf{i}_\mathbf{v})}(T) \Delta_\mathbf{v}^3
    \label{eq:rhonud}
\end{equation}
is the neutrino density at the position space grid point, calculated by integrating $f$ with respect to the velocity coordinates in the phase space.
In Eq.~(\ref{eq:rhonud}), $(\mathbf{i}_\mathbf{x},\mathbf{i}_\mathbf{v})\in\mathcal{I}_6$ is made by concatenating $\mathbf{i}_\mathbf{x},\mathbf{i}_\mathbf{v}\in\mathcal{I}_3$.

Leaving how to take the ensemble average in Eq.~(\ref{eq:PnuDisc}) to Sec.~\ref{sec:ext}, we now consider how to estimate $\left| \tilde{\delta}^\nu_{\mathbf{i}_\mathbf{k}} \right|^2$ from $\ket{\mathbf{f}(T)}$.
Formally, we have the following theorem.

\begin{theorem}
    Suppose that Assumptions \ref{ass:FCDM} and \ref{ass:IniStOracle} hold.
    Suppose that we are given the value of
    \begin{equation}
    C:=\frac{\frac{1}{N_\mathrm{gr}}\left(f_\mathrm{sum}(T)\right)^2}{\|\mathbf{f}(T)\|^2}.
    \end{equation}
    Let $\epsilon\in(0,1/2)$, $\delta\in(0,1)$, and $\mathbf{i}_\mathbf{k}\in\mathcal{I}_3\setminus\{(0,0,0)\}$.
    Then, there exists a quantum algorithm that, with probability at least $1-\delta$, outputs an $\epsilon$-approximation of $\left| \tilde{\delta}^\nu_{\mathbf{i}_\mathbf{k}} \right|^2$ with $\mathbf{f}(T)$ the solution of Eq.~(\ref{eq:EvEq}).
    In this algorithm, (controlled) $\left\{O^{i_t}_{F_{\mathrm{CDM},x}}, O^{i_t}_{F_{\mathrm{CDM},y}}, O^{i_t}_{F_{\mathrm{CDM},z}}\right\}_{i_t}$, arithmetic circuits, and their inverses are queried
    \begin{align}
        & O\left(\left(n_\mathrm{gr}T\times\max\left\{\frac{V}{L},\frac{F_\mathrm{max}}{V}\right\} + n_t\log\left(\frac{n_t}{C\epsilon}\right)\right) \right. \nonumber \\
         & \left.\qquad \times \frac{1}{C\epsilon}\log\left(\frac{1}{\delta}\right) \right)
        \label{eq:CompfPnuEstim1}
    \end{align}
    times, $O_{\mathbf{f}(0)}$ is queried
    \begin{equation}
    O\left(\frac{1}{C\epsilon}\log\left(\frac{1}{\delta}\right) \right)
    \label{eq:CompfPnuEstim2}
    \end{equation}
    times, and
    \begin{equation}
        O\left(\log^{5/2}\left(\frac{n_\mathrm{gr}T}{C\epsilon}\times\max\left\{\frac{V}{L},\frac{F_\mathrm{max}}{V}\right\} \right) \right)
        \label{eq:QubitPnuEstim1}
    \end{equation}
    qubits are used.
    \label{th:PnuEstim}
\end{theorem}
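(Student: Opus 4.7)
The plan is to (i) prepare an $\epsilon'$-approximation of $\ket{\mathbf{f}(T)}$ via Theorem~\ref{th:ketfTGen}, (ii) postprocess by a cheap unitary so that the amplitude on one designated basis state is, up to the known factor $\sqrt{C}$, the Fourier component $\tilde{\delta}^\nu_{\mathbf{i}_\mathbf{k}}$, and (iii) apply QAE to extract its modulus squared. The free parameter $\epsilon'$ and the QAE precision are calibrated together at the end.

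For step (ii), a layer of Hadamards on the three velocity registers implements the uniform sum over $\mathbf{i}_\mathbf{v}$ because $\langle 0|H^{\otimes 3m_{\mathrm{gr}}}|\mathbf{i}_\mathbf{v}\rangle = n_\mathrm{gr}^{-3/2}$ for every $\mathbf{i}_\mathbf{v}$; applying the quantum Fourier transform on the position register then takes $\ket{\mathbf{f}(T)}$ to a state whose amplitude on $\ket{\mathbf{i}_\mathbf{k}}_\mathbf{x}\ket{0}_\mathbf{v}$ equals $\beta_{\mathbf{i}_\mathbf{k}} := \tilde{\rho}^\nu_{\mathbf{i}_\mathbf{k}}/(\|\mathbf{f}(T)\|\,\Delta_\mathbf{v}^3)$, where $\tilde{\rho}^\nu_{\mathbf{i}_\mathbf{k}}$ is the discrete Fourier transform of $\rho^\nu_{\mathbf{i}_\mathbf{x}}$. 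For $\mathbf{i}_\mathbf{k}\ne (0,0,0)$, the DC offset is killed by Fourier orthogonality and $\tilde{\delta}^\nu_{\mathbf{i}_\mathbf{k}} = \tilde{\rho}^\nu_{\mathbf{i}_\mathbf{k}}/\bar{\rho}_\nu$ with $\bar{\rho}_\nu = \Delta_\mathbf{v}^3 f_{\mathrm{sum}}(T)/n_\mathrm{gr}^3$; direct substitution gives the key identity $|\tilde{\delta}^\nu_{\mathbf{i}_\mathbf{k}}|^2 = |\beta_{\mathbf{i}_\mathbf{k}}|^2/C$. Hence, letting $\mathcal{A}:=(\mathrm{QFT}_\mathbf{x}\otimes H^{\otimes 3m_{\mathrm{gr}}}_\mathbf{v}\otimes I_{\mathrm{anc}})\, U_{\mathbf{f}(T),\epsilon'}$, the probability of the ``flagged'' outcome $\ket{\mathbf{i}_\mathbf{k}}_\mathbf{x}\ket{0}_\mathbf{v}\ket{0}^{\otimes(\lg N_\mathrm{gr}+5)}_{\mathrm{anc}}$ in $\mathcal{A}\ket{0}$ equals $C|\tilde{\delta}^\nu_{\mathbf{i}_\mathbf{k}}|^2$ up to an additive perturbation of order $\epsilon'$ contributed by the $\ket{\psi_{\mathrm{gar}}}$ term in Theorem~\ref{th:ketfTGen}.

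For step (iii), I would run QAE (Theorem~\ref{th:QAE} in Appendix~\ref{sec:theorem}) on $\mathcal{A}$ with the above flag to obtain an additive $C\epsilon/2$-approximation of the flagged probability using $O(1/(C\epsilon))$ calls to $\mathcal{A}$ and $\mathcal{A}^\dagger$; choosing $\epsilon' = \Theta(C\epsilon)$ absorbs the state-preparation error within the same budget. Dividing the estimate by the known $C$ returns an $\epsilon$-approximation of $|\tilde{\delta}^\nu_{\mathbf{i}_\mathbf{k}}|^2$ at constant success probability, which the usual median-of-$O(\log(1/\delta))$-runs boost raises to $1-\delta$. Substituting (\ref{eq:CompfTGen}) with $\epsilon\leftarrow C\epsilon$ for the oracle cost of each $U_{\mathbf{f}(T),\epsilon'}$ yields (\ref{eq:CompfPnuEstim1}); the single call to $O_{\mathbf{f}(0)}$ inside each $U_{\mathbf{f}(T),\epsilon'}$ gives (\ref{eq:CompfPnuEstim2}); and the qubit count (\ref{eq:QubitPnuEstim1}) follows from (\ref{eq:QubitfTGen}) plus the $O(\log(1/(C\epsilon)))$ QAE phase-estimation register and the polylog QFT/Hadamard workspace.

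The point demanding the most care is the error budget in (iii): the QAE target amplitude $\beta_{\mathbf{i}_\mathbf{k}}$ is smaller than the physical $\tilde{\delta}^\nu_{\mathbf{i}_\mathbf{k}}$ by a factor $\sqrt{C}$, so both $\epsilon'$ and the QAE precision must be tightened to $\Theta(C\epsilon)$; it is this tightening that inflates the cost by the $1/C$ factor appearing in (\ref{eq:CompfPnuEstim1})--(\ref{eq:CompfPnuEstim2}). Everything else is standard: the QFT and Hadamard layers are logarithmic-depth and do not affect the asymptotic bounds, and Theorem~\ref{th:ketfTGen} carries all the heavy lifting for the Hamiltonian simulation itself.
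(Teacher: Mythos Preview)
Your proposal is correct and follows essentially the same approach as the paper: the paper's postprocessing unitary $W$ is exactly your $\mathrm{QFT}_\mathbf{x}\otimes H^{\otimes 3m_{\mathrm{gr}}}_\mathbf{v}$, the key identity $|\bra{\Omega_{\mathbf{i}_\mathbf{k}}}W\ket{\mathbf{f}(T)}|^2=C|\tilde{\delta}^\nu_{\mathbf{i}_\mathbf{k}}|^2$ is the same, and the paper likewise sets both the state-preparation accuracy and the QAE precision to $\Theta(C\epsilon)$ (specifically $C\epsilon/4$) before dividing by $C$. The only cosmetic difference is that the paper invokes QPE-free QAE variants to avoid an extra phase register, whereas you budget $O(\log(1/(C\epsilon)))$ qubits for it; either way the bound (\ref{eq:QubitPnuEstim1}) is unaffected.
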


\begin{proof}
    Supposing that we are given $\ket{\mathbf{f}(T)}=\frac{1}{\|\mathbf{f}(T)\|}\sum_{\mathbf{i}\in\mathcal{I}_6} f_\mathbf{i}(T) \ket{i_x}\ket{i_y}\ket{i_z}\ket{i_u}\ket{i_v}\ket{i_w}$ and regarding this as the quantum state on the system of six $m_\mathrm{gr}$-qubit registers, we consider the following unitary on this system:
    \begin{equation}
        W:=Q_{m_\mathrm{gr}} \otimes Q_{m_\mathrm{gr}} \otimes Q_{m_\mathrm{gr}} \otimes H_{m_\mathrm{gr}} \otimes H_{m_\mathrm{gr}} \otimes H_{m_\mathrm{gr}}.
        \label{eq:W}
    \end{equation}
    Here, $Q_{m_\mathrm{gr}}$ is the quantum Fourier transform on a $m_\mathrm{gr}$-qubit system, which acts as
    \begin{equation}
        Q_{m_\mathrm{gr}}\ket{j}=\frac{1}{\sqrt{n_\mathrm{gr}}}\sum_{l=0}^{n_\mathrm{gr}-1} \exp\left(\frac{2 \pi i jl}{n_\mathrm{gr}}\right)\ket{l}
    \end{equation}
    for any $j\in[n_\mathrm{gr}]_0$, and $H_{m_\mathrm{gr}}$ is the operation on the same system that is applying the Hadamard gate to each qubit.
    With $m_\mathrm{gr}=\lg n_\mathrm{gr}$, $Q_{m_\mathrm{gr}}$ is implemented as a circuit consisting of $O(\log^2 n_\mathrm{gr})$ Hadamard gates and conditional rotation gates with depth of the same order \cite{nielsen2002}, and $H_{m_\mathrm{gr}}$ is just a collection of $O(\log n_\mathrm{gr})$ Hadamard gates.
    We thus neglect their costs in the following discussion.
    We also note that the average density $\bar{\rho}_\nu(T)$ is related to $\mathbf{f}(T)$ as
    \begin{equation}
        \bar{\rho}_\nu(T)\Delta_\mathbf{x}^3=\frac{1}{N_\mathrm{gr}}\sum_{\mathbf{i}\in\mathcal{I}_6}f_\mathbf{i}(T).
    \end{equation}
    Using these along with Eqs.~(\ref{eq:tilDelNu}), (\ref{eq:delNu}), and (\ref{eq:rhonud}), we obtain by a straightforward calculation that, for $\mathbf{i}_\mathbf{k}:=(i_{k_x},i_{k_y},i_{k_z})\in\mathcal{I}_3$,
    \begin{equation}
        \left|\bra{\Omega_{\mathbf{i}_\mathbf{k}}}W\ket{\mathbf{f}(T)}\right|^2=C\left|\tilde{\delta}^\nu_{\mathbf{i}_\mathbf{k}}\right|^2,
    \end{equation}
    where $\ket{\Omega_{\mathbf{i}_\mathbf{k}}}:=\ket{i_{k_x}}\ket{i_{k_y}}\ket{i_{k_z}}\ket{0}\ket{0}\ket{0}$.
    Since the left-hand side is the squared amplitude of the computational basis state $\ket{\Omega_{\mathbf{i}_\mathbf{k}}}$ in $W\ket{\mathbf{f}(T)}$, we can estimate this by QAE\footnote{Note that for $\ket{\Omega_{\mathbf{i}_\mathbf{k}}}$ the operator $V$ in Theorem \ref{th:QAE}, which flips the amplitude only for $\ket{\Omega_{\mathbf{i}_\mathbf{k}}}$ and does nothing otherwise, can be implemented by X gates and a multi-controlled Z gate.}.

    In summary, we can get an $\epsilon$-approximation of $\left|\tilde{\delta}^\nu_{\mathbf{i}_\mathbf{k}}\right|^2$ by the procedure shown in Algorithm \ref{alg:PnuEstim}.

    \begin{algorithm}[H]
    \begin{algorithmic}[1]
    \REQUIRE Accuracy $\epsilon\in(0,1/2)$, success probability $1-\delta\in(0,1)$, the value of $C$.

    \STATE Construct the unitary quantum circuit $U_{\mathbf{f}(T),C\epsilon/4}$ on a $(2\lg N_\mathrm{gr}+5)$-qubit system following Theorem \ref{th:ketfTGen}.

    \STATE Construct $W^\prime:=I_{32N_\mathrm{gr}} \otimes W$ the unitary on the same system, where $I_{32N_\mathrm{gr}}$ is the identity operator on the first $\lg N_\mathrm{gr}+5$ qubits and $W$ is the operator in Eq.~(\ref{eq:W}) that acts on the other $\lg N_\mathrm{gr}$ qubits.

    \STATE Estimate the squared amplitude of $\ket{\Omega^\prime_{\mathbf{i}_\mathbf{k}}}:=\ket{0}^{\otimes(\lg N_\mathrm{gr}+5)}\ket{\Omega_{\mathbf{i}_\mathbf{k}}}$ in the state $W^\prime U_{\mathbf{f}(T),C\epsilon/4} \ket{0}$ by QAE with accuracy $C\epsilon/4$ and success probability $1-\delta$, and let the estimate be $\tilde{p}$.

    \STATE Output $\tilde{p}/C$.
    
    \caption{Estimation of $\left| \tilde{\delta}^\nu_{\mathbf{i}_\mathbf{k}} \right|^2$}
    \label{alg:PnuEstim}
    \end{algorithmic}
\end{algorithm}

The rest of the proof is on the accuracy and complexity of Algorithm \ref{alg:PnuEstim}.
We postpone it to Appendix \ref{sec:PrThPnuEstim}.
\end{proof}

To illustrate the outline of Algorithm \ref{alg:PnuEstim}, we present the overview diagram in FIG. \ref{fig:CircuitOverview}.
Using the oracles $O^{i_t}_{F_\mathrm{CDM},x}$ etc. that give the CDM gravity, we construct block-encodings of the time evolution operators $\exp(-i \Delta_t H_{i_t})$.
Combining these block-encodings and the initial-state-encoding oracle $O_{\mathbf{f}(0)}$, we construct the unitary $U_{\mathbf{f}(T),\epsilon}$ to generate the state $\ket{\mathbf{f}(T)}$ encoding the solution $\mathbf{f}(T)$ at time $T$ with accuracy $\epsilon$.
Then, this unitary followed by $W$ generate the quantum state, in which the squared amplitude of a specific computational basis state $\ket{\Omega^\prime_{\mathbf{i}_\mathbf{k}}}$ is equal to $C\left|\tilde{\delta}^\nu_{\mathbf{i}_\mathbf{k}}\right|^2$.
Thus, we estimate this squared amplitude by QAE to get $\left|\tilde{\delta}^\nu_{\mathbf{i}_\mathbf{k}}\right|^2$, the squared amplitude of the specified Fourier mode of the neutrino density perturbation.
Note that, in this QAE, we use $U_{\mathbf{f}(T),C\epsilon/4}$, whose accuracy is $C\epsilon/4$, to guarantee the accuracy $\epsilon$ for the estimation of $\left|\tilde{\delta}^\nu_{\mathbf{i}_\mathbf{k}}\right|^2$.

Let us make some comments on $C$.
In Theorem \ref{th:PnuEstim}, we assume that we know the value of $C$ in advance, even though it is defined with $\mathbf{f}(T)$ the solution at the terminal time.
In fact, this assumption is plausible.
This is because, as explained above, $\|\mathbf{f}(t)\|$ is constant over time, and $f_\mathrm{sum}(t)$ is also almost constant.
After all, we can calculate $C$ using the initial value $\mathbf{f}(0)$ instead of $\mathbf{f}(T)$.
The cost for this preliminary calculation will be discussed in Sec.~\ref{sec:oracle}.

Also, note that $C$ is written as
\begin{equation}
    C=\frac{\left(\frac{1}{N_\mathrm{gr}}\sum_{\mathbf{i}\in\mathcal{I}_6}f_\mathbf{i}(0)\right)^2}{\frac{1}{N_\mathrm{gr}}\sum_{\mathbf{i}\in\mathcal{I}_6}\left(f_\mathbf{i}(0)\right)^2},
    \label{eq:C2}
\end{equation}
that is, the ratio of the squared average of $f_\mathbf{i}(T)$ over the grid points to the average of $f_\mathbf{i}(T)$ squared, which we expect is of order 1.

By using Eq.~(\ref{eq:LVSuff}) and $C\sim1$, we simplify Eq.~(\ref{eq:CompfPnuEstim1}) to
\begin{equation}
    \widetilde{O}\left(\frac{n_\mathrm{gr}+n_t}{\epsilon}\right),
    \label{eq:CompAppPnuEstim}
\end{equation}
which is the query complexity bound announced in Introduction.
Besides, Eq. (\ref{eq:QubitPnuEstim1}) becomes of order (\ref{eq:qubitNumSimple}), the space complexity announced in the Introduction.

\subsection{Extensions \label{sec:ext}}

Equipped with Algorithm \ref{alg:PnuEstim} as a basic one, we now consider its extensions so that it matches complications in practice.

\subsubsection{Integrated power spectrum}

Although Algorithm \ref{alg:PnuEstim} outputs an estimate of $\left| \tilde{\delta}^\nu_{\mathbf{i}_\mathbf{k}} \right|^2$, this might not be feasible, since the magnitude of $\left| \tilde{\delta}^\nu_{\mathbf{i}_\mathbf{k}} \right|^2$ may be tiny.
In fact, in the current discrete setting with $n_\mathrm{gr}^3$ grid points in the position coordinates, $\delta_\nu(T,\mathbf{x})$ consists of the $n_\mathrm{gr}^3$ Fourier components, and thus $\left| \tilde{\delta}^\nu_{\mathbf{i}_\mathbf{k}} \right|^2$ the amplitude of each Fourier components is typically suppressed by a factor $1/n_\mathrm{gr}^3$.
Therefore, to obtain a nonzero estimate of a value of such an order, we need to run QAE with $O(1/n_\mathrm{gr}^3)$ accuracy and $O(n_\mathrm{gr}^3)$ query complexity.
After all, the query complexity scaling like Eq.~(\ref{eq:CompAppPnuEstim}) is not achieved.

However, this issue is not serious in practice.
First, under the usual assumption of isotropy, $P_\nu(\mathbf{k})$ does not depend on the direction of $\mathbf{k}$ but only its norm $k=|\mathbf{k}|$.
Furthermore, for the purpose of comparing the numerical simulation with observations, it often suffices to get the integrated power spectrum
\begin{equation}
    \bar{P}_\nu(k_1,k_2):=\int_{k_1}^{k_2} dk 4\pi k^{2} P_\nu(k)
\end{equation}
with some interval $[k_1,k_2]$, which indicates the total magnitude of the neutrino density perturbations of scales between $k_1$ and $k_2$.
In the current discrete setting, the corresponding quantity is
\begin{equation}
    \sum_{\mathbf{i}_\mathbf{k}\in\mathcal{I}_{3,[k_1,k_2]}}  \left| \tilde{\delta}^\nu_{\mathbf{i}_\mathbf{k}} \right|^2,
    \label{eq:Pnubar}
\end{equation}
where
\begin{equation}
    \mathcal{I}_{3,[k_1,k_2]}:=\left\{ \mathbf{i}_\mathbf{k}\in\mathcal{I}_3 \ \middle| \ k_1\le\left\|\mathbf{k}_{\mathbf{i}_\mathbf{k}}\right\|\le k_2\right\}.
\end{equation}
Thus, if we want a quantity like this, whose magnitude is larger than the Fourier-component wise amplitude, then the query complexity of our quantum algorithm might not blow up.

Eq.~(\ref{eq:Pnubar}) can be estimated by Algorithm \ref{alg:PnuEstim} with a slight modification.
Instead of Eq.~(\ref{eq:EstimProb}), we estimate
\begin{equation}
    \sum_{\mathbf{i}_\mathbf{k}\in\mathcal{I}_{3,[k_1,k_2]}} \left|\Bra{\Omega^\prime_{\mathbf{i}_\mathbf{k}}}W^\prime U_{\mathbf{f}(T),C\epsilon/4}\ket{0}^{\otimes(2\lg N_\mathrm{gr}+5)}\right|^2,
    \label{eq:EstimProbRange}
\end{equation}
that is, the probability that we obtain $\ket{0}^{\otimes(\lg N_\mathrm{gr}+5)}\ket{\mathbf{i}_\mathbf{k}}\ket{0}\ket{0}\ket{0}$ with $\mathbf{i}_\mathbf{k}\in\mathcal{I}_{3,[k_1,k_2]}$ when we measure $W^\prime U_{\mathbf{f}(T),C\epsilon/4}\ket{0}^{\otimes(2\lg N_\mathrm{gr}+5)}$.
This probability can be also estimated by QAE with query complexity in Eqs.~(\ref{eq:CompfPnuEstim1}) and (\ref{eq:CompfPnuEstim2}).

\subsubsection{Ensemble average}

So far, we have considered how to obtain $\left| \tilde{\delta}^\nu_{\mathbf{i}_\mathbf{k}} \right|^2$ for one realization, that is, the result of solving Eq.~(\ref{eq:EvEq}) with one initial value $\mathbf{f}(0)$.
However, what we really want is its ensemble average in Eq.~(\ref{eq:PnuDisc}).
In the classical Vlasov simulation, we estimate this ensemble average as
\begin{equation}
    P_\nu(\mathbf{k}_{\mathbf{i}_\mathbf{k}})\simeq \frac{1}{n_\mathrm{IV}} \sum_{i_\mathrm{IV}=0}^{n_\mathrm{IV}-1} \left| \tilde{\delta}^{\nu,i_\mathrm{IV}}_{\mathbf{i}_\mathbf{k}} \right|^2,
    \label{eq:PnuEns}
\end{equation}
that is, the sample average of $\left| \tilde{\delta}^{\nu,0}_{\mathbf{i}_\mathbf{k}} \right|^2,...,\left| \tilde{\delta}^{\nu,n_\mathrm{IV}-1}_{\mathbf{i}_\mathbf{k}} \right|^2$, the values of $\left| \tilde{\delta}^\nu_{\mathbf{i}_\mathbf{k}} \right|^2$ in the $n_\mathrm{IV}$ different runs of the simulation, where we randomly take the different initial values according to the theoretical distribution of the primordial perturbation \cite{Yoshikawa_2020,Yoshikawa2021,Yoshikawa2023}.

Also in quantum computing, we may do the same thing, by not running the quantum algorithm many times separately but utilizing quantum superposition.
That is, we extend Assumption \ref{ass:IniStOracle} so that we can generate the superposition of the initial values $\mathbf{f}^{(0)}(0),...,\mathbf{f}^{(n_\mathrm{IV}-1)}(0)$:
\begin{equation}
    O^\prime_{\mathbf{f}(0)}\ket{0}\ket{0}=\frac{1}{\sqrt{n_\mathrm{IV}}}\sum_{i_\mathrm{IV}=0}^{n_\mathrm{IV}-1}\ket{\mathbf{f}^{(i_\mathrm{IV})}(0)}\ket{i_\mathrm{IV}},
\end{equation}
where $n_\mathrm{IV}=2^{m_\mathrm{IV}}$ with $m_\mathrm{IV}\in\mathbb{N}$ for convenience.
Note that the $N$-body simulation for CDM should be also run many times with the initial values taken randomly, which leads to the different $\mathbf{F}_\mathrm{CDM}(t,\mathbf{x})$ in the different runs.
Thus, we also extend Assumption \ref{ass:FCDM} so that we can use the oracle to access $\mathbf{F}_\mathrm{CDM}(t,\mathbf{x})$ in the specified run:
\begin{align}
    \tilde{O}^{i_t}_{F_{\mathrm{CDM},x}} \ket{\mathbf{i}_\mathbf{x}}\ket{i_\mathrm{IV}}\ket{0}&= \ket{\mathbf{i}_\mathbf{x}}\ket{i_\mathrm{IV}}\ket{F_{\mathrm{CDM},x}^{i_t,\mathbf{i}_\mathbf{x},i_\mathrm{IV}}} \nonumber \\
    \tilde{O}^{i_t}_{F_{\mathrm{CDM},y}} \ket{\mathbf{i}_\mathbf{x}}\ket{i_\mathrm{IV}}\ket{0}&=  \ket{\mathbf{i}_\mathbf{x}}\ket{i_\mathrm{IV}}\ket{F_{\mathrm{CDM},y}^{i_t,\mathbf{i}_\mathbf{x},i_\mathrm{IV}}} \nonumber \\
    \tilde{O}^{i_t}_{F_{\mathrm{CDM},z}} \ket{\mathbf{i}_\mathbf{x}}\ket{i_\mathrm{IV}}\ket{0}&=  \ket{\mathbf{i}_\mathbf{x}}\ket{i_\mathrm{IV}}\ket{F_{\mathrm{CDM},z}^{i_t,\mathbf{i}_\mathbf{x},i_\mathrm{IV}}},
\end{align}
where $F_{\mathrm{CDM},x}^{i_t,\mathbf{i}_\mathbf{x},i_\mathrm{IV}}$ is $F_{\mathrm{CDM},x}(t_{i_t},\mathbf{x}_{\mathbf{i}_\mathbf{x}})$ in the $i_\mathrm{IV}$th run of the $N$-body simulation for CDM, and so on.
We postpone the implementation of these oracles to Sec.~\ref{sec:oracle}.

Equipped with these oracles, we can easily modify Algorithm \ref{alg:PnuEstim} so that it outputs Eq.~(\ref{eq:PnuEns}).
That is, we add $m_\mathrm{IV}$ qubits, and replace $O_{\mathbf{f}(0)}$ with $O^\prime_{\mathbf{f}(0)}$.
We also use $\tilde{O}^{i_t}_{F_{\mathrm{CDM},x}}$ instead of $O^{i_t}_{F_{\mathrm{CDM},x}}$, and so on.
By this change, $U_{\mathbf{f}(T),C\epsilon/4}$ in Algorithm \ref{alg:PnuEstim} is modified to the operator $U^\prime_{\mathbf{f}(T),C\epsilon/4}$ on the $(2\lg N_\mathrm{gr}+m_\mathrm{IV}+5)$-qubit system that acts as
\begin{align}
    & U^\prime_{\mathbf{f}(T),C\epsilon/4}\ket{0}^{\otimes(2\lg N_\mathrm{gr}+m_\mathrm{IV}+5)} \nonumber \\
    & \quad =\frac{1}{\sqrt{n_\mathrm{IV}}}\sum_{i_\mathrm{IV}=0}^{n_\mathrm{IV}-1}\ket{0}^{\otimes(\lg N_\mathrm{gr}+5)}\Ket{\mathbf{f}^{(i_\mathrm{IV})}(T)}\ket{i_\mathrm{IV}} + \ket{\xi_\mathrm{gar}},
\end{align}
where $\mathbf{f}^{(i_\mathrm{IV})}(T)$ is the solution of Eq.~(\ref{eq:EvEq}) for the $i_\mathrm{IV}$th initial value $\mathbf{f}^{(i_\mathrm{IV})}(0)$, and $\ket{\xi_\mathrm{gar}}$ is some unnormalized quantum state satisfying $\|\ket{\xi_\mathrm{gar}}\|\le\epsilon$.
Then, we see that the following quantity is equal to the right-hand side of Eq.~(\ref{eq:PnuEns}),
\begin{equation}
    \sum_{\mathbf{i}_\mathrm{IV}=0}^{n_\mathrm{IV}-1} \left|\Bra{\Omega^{\prime\prime}_{\mathbf{i}_\mathbf{k},i_\mathrm{IV}}}W^{\prime\prime} U^\prime_{\mathbf{f}(T),C\epsilon/4}\ket{0}^{\otimes(2\lg N_\mathrm{gr}+m_\mathrm{IV}+5)}\right|^2,
    \label{eq:ProbEstEns}
\end{equation}
where $\Ket{\Omega^{\prime\prime}_{\mathbf{i}_\mathbf{k},i_\mathrm{IV}}}:=\Ket{\Omega^{\prime}_{\mathbf{i}_\mathbf{k}}}\ket{i_\mathrm{IV}}$ and $W^{\prime\prime}:=W^\prime \otimes I_{n_\mathrm{IV}}$.
Eq.~(\ref{eq:ProbEstEns}) is the probability that we obtain $\ket{0}^{\otimes(\lg N_\mathrm{gr}+5)}\ket{\mathbf{i}_\mathbf{k}}\ket{0}\ket{0}\ket{0}\ket{i_\mathrm{IV}}$ with any $i_\mathrm{IV}\in[n_\mathrm{IV}]_0$ when we measure $W^{\prime\prime} U^\prime_{\mathbf{f}(T),C\epsilon/4}\ket{0}^{\otimes(2\lg N_\mathrm{gr}+m_\mathrm{IV}+5)}$ and can be estimated by QAE.
The numbers of queries to the replaced oracles are still of order (\ref{eq:CompfPnuEstim1}) and (\ref{eq:CompfPnuEstim2}).

Note that in the above discussion, we have implicitly assumed that the value of $C$ is the same for different $\mathbf{f}^{(i_\mathrm{IV})}(0)$.
This in fact holds approximately for large $N_\mathrm{gr}$, since $C$ is written with the averages of the values of $f(0,\mathbf{x},\mathbf{v})$ and its square on the $N_\mathrm{gr}$ grid points as Eq.~(\ref{eq:C2}), and the realizations of $\mathbf{f}^{(i_\mathrm{IV})}(0)$ for various $i_\mathrm{IV}$ are generated based on the same distribution of the primordial perturbation.

\subsection{Implementation of the oracles \label{sec:oracle}}

Now, let us consider the implementations of the oracles used in our quantum algorithm.

\subsubsection{Oracles to access $\mathbf{F}_\mathrm{CDM}$}

We first consider $\left\{O^{i_t}_{F_{\mathrm{CDM},x}}, O^{i_t}_{F_{\mathrm{CDM},y}}, O^{i_t}_{F_{\mathrm{CDM},z}}\right\}_{i_t}$, the oracles to access $\mathbf{F}_\mathrm{CDM}(t,\mathbf{x})$ the gravitational force by CDM as Eq.~(\ref{eq:OracleFCDM}).
Since we now assume that the $N$-body simulation for CDM gives the values of this on the $n_\mathrm{gr}^3$ grid points in the position coordinates, it seems that we need to resort to QRAM that stores those $n_\mathrm{gr}^3$ real numbers.
In the recent Vlasov simulations with massive neutrino, the number of the grid points in the position coordinates is of order $10^9$ \cite{Yoshikawa2021}.
Because of the issues on QRAM mentioned in Sec.~\ref{sec:QRAM}, realizing a QRAM with such a large number of entries seems challenging.
Nevertheless, compared to the classical Vlasov simulation, in which the $O(n_\mathrm{gr}^6)$ memory space is used, the QRAM size of $O(n_\mathrm{gr}^3)$ indicates a large improvement with respect to scaling on $n_\mathrm{gr}$.

We should also note that calculating $\{F_{\mathrm{CDM},x}^{i_t,\mathbf{i}_\mathbf{x}},F_{\mathrm{CDM},y}^{i_t,\mathbf{i}_\mathbf{x}},F_{\mathrm{CDM},z}^{i_t,\mathbf{i}_\mathbf{x}}\}_{i_t,\mathbf{i}_\mathbf{x}}$ by the $N$-body simulation and preparing QRAMs that store them takes $O(n_tn_\mathrm{gr}^3)$ time, which has the worse scaling on $n_\mathrm{gr}$ than the query complexity of our quantum algorithm in Eq.~(\ref{eq:CompfPnuEstim1}).
However, as mentioned in Sec.~\ref{sec:ClLSSSim}, in the current classical computing, the $N$-body simulation for only CDM is less heavy than the Vlasov and $N$-body simulations including massive neutrino.
Therefore, it is reasonable to consider only the application of quantum computing to the Vlasov simulation for neutrino, leaving the $N$-body simulation for CDM classical.
Also note that, once QRAMs storing $F_{\mathrm{CDM},x}^{i_t,\mathbf{i}_\mathbf{x}}$ and so on are prepared, we can reuse it for the neutrino Vlasov simulations in the various settings, e.g., various neutrino masses, which relatively diminishes the cost for the $N$-body simulation for CDM.

We also comment that, the approximation that $\mathbf{F}_\mathrm{CDM}(t,\mathbf{x})$ is piecewise constant in time reduces the size of QRAMs to be prepared.
Although we are given $\{F_{\mathrm{CDM},x}^{i_t,\mathbf{i}_\mathbf{x}},F_{\mathrm{CDM},y}^{i_t,\mathbf{i}_\mathbf{x}},F_{\mathrm{CDM},z}^{i_t,\mathbf{i}_\mathbf{x}}\}_{i_t,\mathbf{i}_\mathbf{x}}$, whose number per vector element, say $F_{\mathrm{CDM},x}$, is $n_tn_\mathrm{gr}^3$, our algorithm uses not one QRAM with $n_tn_\mathrm{gr}^3$ entries per vector element, but $n_t$ QRAMs with $n_\mathrm{gr}^3$ entries, which are $\{F_{\mathrm{CDM},x}^{i_t,\mathbf{i}_\mathbf{x}}\}_{\mathbf{i}_\mathbf{x}}$ for one value of $i_t\in[n_t]_0$.
This is because, as explained in Sec.~\ref{sec:GenQSHamSim}, to get $\ket{\mathbf{f}(T)}$, we successively apply Hamiltonian simulations for the time intervals $[t_0,t_1),...,[t_{n_t-1},t_{n_t})$, and in each of them we use only the constant value of $\mathbf{F}_\mathrm{CDM}(t,\mathbf{x})$ in the interval.
Dividing the QRAM into smaller ones reduces the difficulty of implementation.
In particular, for the circuit-based QRAM, we do not need to use $n_tn_\mathrm{gr}^3$ qubits, but it suffices to reuse $n_\mathrm{gr}^3$ qubits, with different QRAMs implemented in the Hamiltonian simulation for the different time interval.
We do not enjoy this reduction if $\mathbf{F}_\mathrm{CDM}(t,\mathbf{x})$ continuously varies in time and we use the time-dependent Hamiltonian simulation \cite{Kieferova2019}, in which we use the oracle $O^H_\mathrm{ent}:\ket{t}\ket{i}\ket{j}\ket{0}\mapsto\ket{t}\ket{i}\ket{j}\ket{H_{ij}(t)}$ and thus the oracles such as $O_{F_{\mathrm{CDM},x}}: \ket{t}\ket{\mathbf{i}_\mathbf{x}}\ket{0} \mapsto \ket{t}\ket{\mathbf{i}_\mathbf{x}}\ket{F_{\mathrm{CDM},x}(t,x_{\mathbf{i}_\mathbf{x}})}$ in our case.

As the last comment, we note that when we want to estimate Eq.~(\ref{eq:PnuEns}), the average of $\left| \tilde{\delta}^{\nu,0}_{\mathbf{i}_\mathbf{k}} \right|^2,...,\left| \tilde{\delta}^{\nu,n_\mathrm{IV}-1}_{\mathbf{i}_\mathbf{k}} \right|^2$, the number of the entries in each QRAM increases to $n_\mathrm{gr}^3n_\mathrm{IV}$.

\subsubsection{Oracle to generate the state encoding the initial value}

Next, we consider the oracle $O_{\mathbf{f}(0)}$ in Eq.~(\ref{eq:Of0}), which generates the quantum state $\ket{\mathbf{f}(0)}$ encoding the initial value $\mathbf{f}(0)$.

In the neutrino Vlasov simulation, the initial distribution function is set to a Fermi-Dirac distribution \cite{Yoshikawa2023},
\begin{equation}
f(0,\mathbf{x},\mathbf{v}) \propto (1+\delta_\nu(0,\mathbf{x}))F_\mathrm{FD}(\mathbf{v}-\mathbf{v}_\mathrm{b}(\mathbf{x})).
\label{eq:NeuIni}
\end{equation}
Here, $\mathbf{v}_\mathrm{b}(\mathbf{x})$ is the neutrino bulk velocity at position $\mathbf{x}$, and
\begin{equation}
    F_\mathrm{FD}(\mathbf{v}-\mathbf{v}_\mathrm{b}):=\frac{1}{\exp\left(\frac{m_\nu|\mathbf{v}-\mathbf{v}_\mathrm{b}|}{k_\mathrm{B}T_\nu}\right)+1},
    \label{eq:F_FD}
\end{equation}
where $k_\mathrm{B}$ is the Boltzmann constant and $T_\nu$ is the neutrino temperature at the initial time.
The initial density fluctuation and bulk velocity are calculated based on the theory for the primordial cosmological perturbation \cite{Yoshikawa2023}.

Then, our aim is generating the following quantum state $\ket{\mathbf{f}(0)}$ encoding Eq.~(\ref{eq:NeuIni}).
It is written as
\begin{equation}
    \ket{\mathbf{f}(0)} = \mathcal{C} \sum_{\mathbf{i}_\mathbf{x}\in\mathcal{I}_3} (1+\delta_\nu(0,\mathbf{x}_{\mathbf{i}_\mathbf{x}}))\ket{\mathbf{i}_\mathbf{x}}\ket{F_{\mathrm{FD},\mathbf{v}_\mathrm{b}(\mathbf{x}_{\mathbf{i}_\mathbf{x}})}},
\end{equation}
where
\begin{equation}
    \ket{F_{\mathrm{FD},\mathbf{v}_\mathrm{b}}}:= \mathcal{D}_{\mathbf{v}_\mathrm{b}}\sum_{\mathbf{i}_\mathbf{v}\in\mathcal{I}_3} F_\mathrm{FD}(\mathbf{v}_{\mathbf{i}_\mathbf{v}}-\mathbf{v}_\mathrm{b}) \ket{\mathbf{v}_{\mathbf{i}_\mathbf{v}}}.
\end{equation}
and $\mathcal{C}$ and $\mathcal{D}_{\mathbf{v}_\mathrm{b}}$ are normalization constants.

We now assume that $\{\delta_\nu(0,\mathbf{x}_{\mathbf{i}_\mathbf{x}})\}_{\mathbf{i}_\mathbf{x}}$ are classically computed as a preparation of our quantum algorithm and stored in a QRAM.
Given such a QRAM, we can generate the quantum state $\mathcal{C} \sum_{\mathbf{i}_\mathbf{x}\in\mathcal{I}_3} (1+\delta_\nu(0,\mathbf{x}_{\mathbf{i}_\mathbf{x}}))\ket{\mathbf{i}_\mathbf{x}}$ querying the QRAM $O(\log n_\mathrm{gr}^3)$ times by the method in \cite{kerenidis2017quantum}.
We also assume that $\{\mathbf{v}_\mathrm{b}(\mathbf{x}_{\mathbf{i}_\mathbf{x}})\}_{\mathbf{i}_\mathbf{x}}$ are precomputed and stored in a QRAM.
There are some methods to generate a quantum state encoding a function given as an explicit formula in the amplitudes \cite{grover2002creating,Sanders2019,marin2021quantum,rattew2022preparing}.
By such a method, we can perform the operation $\ket{\mathbf{v}_\mathrm{b}}\ket{0}\rightarrow\ket{\mathbf{v}_\mathrm{b}}\ket{F_{\mathrm{FD},\mathbf{v}_\mathrm{b}}}$, querying the quantum circuit to compute $F_\mathrm{FD}(\mathbf{v}-\mathbf{v}_\mathrm{b})$ $O(\log n_\mathrm{gr}^3)$ times.
Combining these circuits and QRAMs, we can generate $\ket{\mathbf{f}(0)}$ with $O(\mathrm{polylog}(n_\mathrm{gr}))$ query complexity.
Note that the sizes of the aforementioned QRAMs are $O(n_\mathrm{gr}^3)$, which is of the same order as QRAMs storing $\mathbf{F}_\mathrm{CDM}$.
Also note that preparing those QRAMs takes a $O(n_\mathrm{gr}^3)$ time and this does not increase the order of the preparation time, which already includes the $O(n_\mathrm{gr}^3)$ time to prepare the QRAMs for $\mathbf{F}_\mathrm{CDM}$.

In the above discussion, we have assumed that there is only one neutrino flavor with mass $m_\nu$.
This is just for simplicity, and the extension to the more realistic multi-flavor setting is straightforward as follows.
The point is that we are now approximating the neutrino Vlasov equation as Eq.~(\ref{eq:VlasovNeu}) and thus the different masses of the different flavors affect the solution only through the initial value, which depends on the mass.
Since Eq.~(\ref{eq:VlasovNeu}) is linear with respect to $f$, if we set the initial value to the linear combination of those for the various flavors, then the solution becomes the distribution function for the mixture of the flavors.
In our quantum algorithm, the modification is only replacing $\ket{F_{\mathrm{FD},\mathbf{v}_\mathrm{b}}}$, which encodes the single-flavor distribution, with the quantum state encoding the initial distribution of the mixture.
Since it is still given as an explicit formula, we can use the aforementioned methods for function-loading to the quantum state.

Lastly, we make a comment on the cost to calculate $C$, which has been postponed.
The numerator of Eq.~(\ref{eq:C2}) is equal to
\begin{equation}
    \frac{1}{64V^6}\left(\bar{\rho}_\nu(0)\right)^2,
\end{equation}
and thus written with $\bar{\rho}_\nu(0)$, which is determined by the normalization of $f$.
On the other hand, the denominator of Eq.~(\ref{eq:C2}) becomes
\begin{equation}
    \frac{1}{N_\mathrm{gr}}\sum_{\mathbf{i}_\mathbf{x}\in\mathcal{I}_3}(1+\delta_\nu(0,\mathbf{x}_{\mathbf{i}_\mathbf{x}}))^2\sum_{\mathbf{i}_\mathbf{v}\in\mathcal{I}_3} \left(F_\mathrm{FD}(\mathbf{v}_{\mathbf{i}_\mathbf{v}}-\mathbf{v}_\mathrm{b}(\mathbf{x}_{\mathbf{i}_\mathbf{x}}))\right)^2
    \label{eq:CDenom}
\end{equation}
for $f(0,\mathbf{x},\mathbf{v})$ in the form of Eq.~(\ref{eq:NeuIni}).
The latter part of Eq.~(\ref{eq:CDenom}) is calculated as
\begin{align}
    & \sum_{\mathbf{i}_\mathbf{v}\in\mathcal{I}_3} \left(F_\mathrm{FD}(\mathbf{v}_{\mathbf{i}_\mathbf{v}}-\mathbf{v}_\mathrm{b}(\mathbf{x}_{\mathbf{i}_\mathbf{x}}))\right)^2 \nonumber \\
    \simeq & \frac{1}{\Delta_\mathbf{v}^3}\int_{-V}^V du \int_{-V}^V dv \int_{-V}^V dw \left(F_\mathrm{FD}(\mathbf{v}-\mathbf{v}_\mathrm{b}(\mathbf{x}_{\mathbf{i}_\mathbf{x}}))\right)^2 \nonumber \\
    \simeq & \frac{1}{\Delta_\mathbf{v}^3}\int_{-\infty}^\infty du \int_{-\infty}^\infty dv \int_{-\infty}^\infty dw \left(F_\mathrm{FD}(\mathbf{v})\right)^2,
    \label{eq:FFDSq}
\end{align}
where we replace the sum with the integral at the first approximation, and at the second approximation we neglect the contribution to the integral from the region outside $[-V,V]\times[-V,V]\times[-V,V]$.
Eq.~(\ref{eq:FFDSq}) is independent of $\mathbf{i}_\mathbf{x}$ and evaluated numerically.
The remaining part is $\sum_{\mathbf{i}_\mathbf{x}\in\mathcal{I}_3}(1+\delta_\nu(0,\mathbf{x}_{\mathbf{i}_\mathbf{x}}))^2$.
We can calculate this at the same time as generating $\{\delta_\nu(0,\mathbf{x}_{\mathbf{i}_\mathbf{x}})\}_{\mathbf{i}_\mathbf{x}}$, keeping the order of the preparation time $O(n_\mathrm{gr}^3)$.

\section{Demonstration \label{sec:demo}}

\begin{figure*}[tp]
\centering
    \subfigure[$T=0$]{
    \includegraphics[width=1\columnwidth]{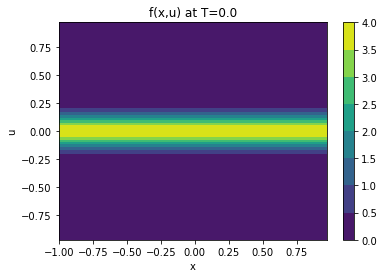} 	\label{fig:fT0}
    }
    \subfigure[$T=0.1$]{
    \includegraphics[width=1\columnwidth]{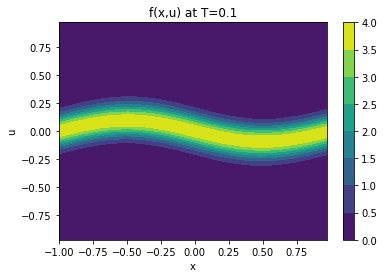} 	\label{fig:fT01}
    }
    \subfigure[$T=0.2$]{
    \includegraphics[width=1\columnwidth]{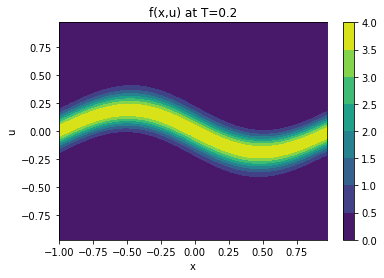} 	\label{fig:fT02}
    }

\caption{$f(T,x,u)$ at $T=0,0.1$ and $0.2$.}
\label{fig:fHeatmap}
\end{figure*}

\begin{figure}[tp]
\centering
\includegraphics[width=1\columnwidth]{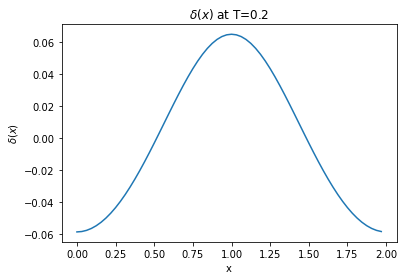}
\caption{$\delta^\nu_{i_x}$ at $T=0.2$.}
\label{fig:deltax}
\end{figure}

\begin{figure}[tp]
\centering
\includegraphics[width=1\columnwidth]{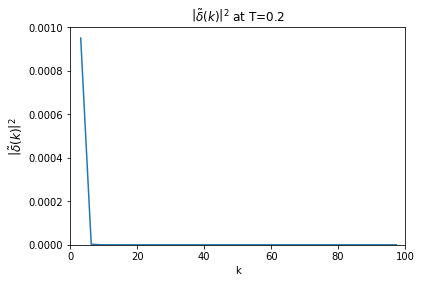}
\caption{$\left|\tilde{\delta}^\nu_{i_k}\right|^2$ at $T=0.2$.}
\label{fig:deltak}
\end{figure}

Lastly, we present an illustrative numerical model.
It is impossible to run our algorithm on a current real quantum computer or a quantum circuit simulator.
Instead, we demonstrate a core part of our algorithm, namely, the Hamiltonian simulation-based time evolution of the neutrino distribution function $f(t,\mathbf{x},\mathbf{v})$.
By considering a toy problem, we find the solution of Eq.~(\ref{eq:ShEq}), the discretized Vlasov equation, as Eq.~(\ref{eq:fTExp}) with the matrix exponentiation $\exp\left(-i \Delta_t H_{i_t}\right)$ calculated explicitly on a classical computer.
This is what we aim at by the quantum algorithm of Hamiltonian simulation, and differs from common classical methods for time integration.

We consider the 1D setting with the position coordinate $x$ and the velocity coordinate $v$.
We give the CDM gravity by the following explicit function
\begin{equation}
    F_{\mathrm{CDM},x}(x)=A\sin(Kx)
\end{equation}
with real constants $A$ and $K$.
Then, under the discretization described in Sec. \ref{sec:Disc}, we find $\mathbf{f}(T)=\left(f(T,x_0,u_0),...,f(T,x_{n_\mathrm{gr}-1},u_{n_\mathrm{gr}-1})\right)$, the vector of the values of $f(T,x,u)$ on the grid points.
Since $F_{\mathrm{CDM},x}$ does not depend on time and thus the Hamiltonian $H$ does not either, we can get $\mathbf{f}(T)$ by applying a single operator $\exp\left(-i HT\right)$ to $\mathbf{f}(0)$:
\begin{equation}
    \ket{\mathbf{f}(T)}=\exp\left(-i HT\right)\ket{\mathbf{f}(0)}.
    \label{eq:fTExpOneTime}
\end{equation}
The initial value is set as follows: $f(0,x,u)$ is constant in $x$, and the Maxwell distribution is adopted in the $u$ direction.
That is,
\begin{equation}
    f(0,x,u)=\frac{1}{\sqrt{2\pi\sigma_v^2}} \exp\left(-\frac{u^2}{2\sigma_v^2}\right)
\end{equation}
with $\sigma_v>0$.

We show the result for $A=-1$, $K=\pi$ and $\sigma_v=0.1$ under the discretization with $L=2$, $V=1$ and $n_\mathrm{gr}=64$.
FIG. \ref{fig:fHeatmap} shows the heatmaps of $f(T,x,u)$ drawn with $\mathbf{f}(T)$ obtained by Eq.~(\ref{eq:fTExpOneTime}) with $T=0,0.1$, and $0.2$.
From this $\mathbf{f}(T)$, we calculate $\delta^\nu_{i_x}$, the neutrino density perturbation at each grid point, as Eq.~(\ref{eq:delNu}), and plot it as a function of $x=i_xL/n_\mathrm{gr}$ in FIG.~\ref{fig:deltax}.
Because of the functional form of $F_{\mathrm{CDM},x}$, we expect that the density is enhanced at $x=1$ and suppressed at $x=0,2$, and FIG.~\ref{fig:deltax} matches this expectation.
Using this $\delta^\nu_{i_x}$, we calculate $|\tilde{\delta}^\nu_{i_k}|^2$, the squared amplitude of the Fourier mode of the neutrino density perturbation, via Eq.~(\ref{eq:tilDelNu}), and plot it as a function of $k=2 \pi i_k/L$ in FIG.~\ref{fig:deltak}.
Since the neutrino distribution evolves under the force as a single sinusoidal function with wavenumber $k=K$, we expect that only the Fourier mode with that wavenumber is enhanced.
In fact, this is observed in FIG.~\ref{fig:deltak}.
In summary, these figures imply that our calculation is working in this demonstration.

\section{Summary \label{sec:summary}}

In this paper, we considered the quantum algorithm for simulations of LSS formation with massive neutrinos. The large-scale neutrino distribution is an important issue for both cosmology and particle physics.
In order to follow the growth of density perturbations of neutrinos with a large velocity dispersion, 
it is desirable to solve directly
 the Vlasov equation in an efficient way 
 rather than performing conventional $N$-body simulations.
However, one needs to solve the PDE in a $(6+1)$-dimensional space, and thus it is a challenging task: Taking $n_\mathrm{gr}$ grid points in each of six space coordinates and $n_t$ time grid points leads to $O(n_tn_\mathrm{gr}^6)$ and $O(n_\mathrm{gr}^6)$ space complexity.
We thus proposed a quantum algorithm for this task.
First, by neglecting the gravity generated by the neutrinos
that contribute only a tiny fraction in mass, and hence by taking into account only the gravity by CDM, we approximated the Vlasov equation in a linear form.
We showed that the discretized Vlasov equation can be regarded as a Schr\"{o}dinger equation.
Then, by calculating the distribution of CDM using a fast $N$-body simulation in advance, we applied the Hamiltonian simulation to solve the equation.
We proposed not only how to obtain the solution-encoding quantum state but also a way to extract the neutrino power spectrum, which is an important quantity for comparisons with cosmological observations and thus is of practical interest, from the quantum state by QAE.
Our method outputs a $\epsilon$-approximation of the power spectrum with $\widetilde{O}((n_\mathrm{gr}+n_t)/\epsilon)$ query complexity, using $O\left(\log^{5/2}\left(n_\mathrm{gr}/\epsilon\right)\right)$ qubits.

Although our algorithm has some shortcomings such as the assumption on the availability of large-sized QRAMs, it is the first proposal of a quantum algorithm for the LSS simulation that outputs the quantity of practical interest with guaranteed accuracy.
We thus believe that our algorithm is an important step in the application of quantum computing to the LSS simulation or, more broadly, challenging numerical tasks in cosmology and particle physics.

In future works, we will explore the possibility of extending our algorithm.
Obviously, taking into account the neutrino self-gravity via, e.g., Carleman linearization, is one possible way.
Another possibility is incorporating higher-order schemes for partial derivatives, for which we are now adopting the central difference as Eq.~(\ref{eq:centDiff}).
Adopting higher-order schemes has two effects.
On the one hand, it makes the coefficient matrix $A(t)$ less sparse: If we adopt a $n$-th-order scheme, then the sparsity $s$ scales as $O(n)$.
In the sparse-access setting, the query complexity of Hamiltonian simulation scales as $O(s)$ on $s$, and thus so does our algorithm.
In total, the query complexity of our algorithm scales as $O(n)$ on $n$.
On the other hand, by adopting a higher-order scheme, we may configure spatial grid points with larger interval, which means smaller $n_\mathrm{gr}$, keeping the accuracy of the solution.
Since the query complexity of our algorithm scales as $O(n_\mathrm{gr})$, in total, if an $n$-th-order scheme leads to $n_\mathrm{gr}$ scaling better than $n_\mathrm{gr}=O(1/n)$, then higher-order schemes would be beneficial.
As a previous study on the effect of higher-order schemes in practice, we refer to \cite{Tanaka_2017}, which performed a set of simulations with third-, fifth-, and seventh-order schemes and in fact observed the improvement of the accuracy.
However, in general, the extent to which higher-order schemes improve accuracy is highly problem-dependent, and thus we leave incorporating higher-order schemes into our algorithm for future works.

\section*{acknowledgement}

KM is supported by MEXT Quantum Leap Flagship Program (MEXT Q-LEAP) Grant no. JPMXS0120319794, JSPS KAKENHI Grant no. JP22K11924, and JST COI-NEXT Program Grant No. JPMJPF2014.

SY is supported by the Forefront Physics and Mathematics Program to Drive Transformation (FoPM).

FU is supported by JSPS KAKENHI Grant No.~JP23KJ0642 and the Forefront Physics and Mathematics Program to Drive Transformation (FoPM).

KF is supported by JSPS KAKENHI Grant No.~JP20K14512.

\appendix

\section{Proofs}

\subsection{Building-block theorems \label{sec:theorem}}

Here, we present some theorems from previous papers, which are used to prove our main results, theorems on the complexity of our quantum algorithm.

\subsubsection{Block encoding}

Mathematically, block-encoding is defined as follows:

\begin{definition}[\cite{Gilyen2019}, Definition 24]
Let $s,a\in\mathbb{N}$, $\alpha,\epsilon\in\mathbb{R}_+$ and $A\in\mathbb{C}^{2^s\times2^s}$.
We say that a unitary $U$ on a $(s+a)$-qubit system is an $(\alpha,a,\epsilon)$-block-encoding of $A$, if
\begin{equation}
    \left\|A-\alpha(\bra{0}^{\otimes a}\otimes I_{2^s})U(\ket{0}^{\otimes a}\otimes I_{2^s})\right\|\le\epsilon
\end{equation}
\end{definition}

We can efficiently construct a block-encoding of a sparse matrix if we have access to its entries.
The cost for this is summarized as the following theorem.

\begin{theorem}[\cite{Gilyen2019}, Lemma 48 in the full version, modified]

Let $A=(A_{ij})\in\mathbb{C}^{2^w \times 2^w}$ be a $s$-sparse matrix.
Suppose that we have accesses to the oracles $O^A_\mathrm{row}$ and $O^A_\mathrm{col}$ that act on a two register system as
\begin{equation}
    O^A_\mathrm{row}\ket{i}\ket{k}=\ket{i}\ket{r_{ik}},O^A_\mathrm{col}\ket{k}\ket{i}=\ket{c_{ki}}\ket{i}
\end{equation}
for any $i\in[2^w]_0$ and $k\in[s]_0$, where $r_{ik}$ (resp. $c_{ki}$) is the index of the $k$th nonzero entry in the $i$th row (resp. $i$th column) in $A$, or $i+2^w$ if there are less than $k$ nonzero entries.
Besides, suppose that we have accesses to the oracle $O^A_\mathrm{ent}$ that acts on a three register system as
\begin{equation}
    O^A_\mathrm{ent}\ket{i}\ket{j}\ket{0}=\ket{i}\ket{j}\ket{A_{ij}}.
\end{equation}
Then, for any $\epsilon\in\mathbb{R}_+$, there exists a $(s\|A\|_\mathrm{max},w+3,\epsilon)$-block-encoding of $A$, in which $O^A_\mathrm{row}$ and $O^A_\mathrm{col}$ are each queried once, $O^A_\mathrm{ent}$ is queried twice, additional $O\left(w+\log^{5/2}\left(\frac{s^2\|A\|_\mathrm{max}}{\epsilon}\right)\right)$ one and two qubit gates are used, and $O\left(\log^{5/2}\left(\frac{s^2\|A\|_\mathrm{max}}{\epsilon}\right)\right)$ ancilla qubits are used.
\label{th:BlEncSp}
\end{theorem}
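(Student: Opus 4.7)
The plan is to follow the standard sparse-access block-encoding construction of Gily\'en et al., with bookkeeping tuned to yield the ancilla count $a = w+3$ and the stated gate/precision overhead. I would build $U$ as a product $U = V_R^\dagger \cdot R \cdot V_L$, where $V_L$ and $V_R$ are ``diffuser-like'' preparation unitaries assembled from $O^A_\mathrm{col}$ and $O^A_\mathrm{row}$ respectively, and $R$ is a single controlled rotation whose angle is a classically computed function of the loaded matrix entry $A_{ij}$. The point is that sandwiching $R$ between $V_L$ and $V_R^\dagger$ and projecting onto $\ket{0}^{\otimes a}$ on the ancillas reproduces $A_{ij}/(s\|A\|_\mathrm{max})$ up to the stated error.

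First, I would define $V_L$ on registers (index $w$, index $w$, counter $\lceil\log s\rceil$, flag/scratch) so that starting from $\ket{j}\ket{0^w}\ket{0}\ket{0}$ it produces $\tfrac{1}{\sqrt{s}}\sum_{k\in[s]_0}\ket{j}\ket{c_{kj}}\ket{k}\ket{0}$: apply Hadamards on the counter, query $O^A_\mathrm{col}$ once to write the column-index of the $k$-th nonzero entry, and set the flag to $\ket{1}$ whenever the returned index equals the ``out-of-range'' sentinel $i+2^w$ so that those branches are removed from the good subspace. Symmetrically, $V_R$ uses $O^A_\mathrm{row}$ once and manipulates the other index register. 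Between these preparations, I would call $O^A_\mathrm{ent}$ once to write $A_{ij}$ into a scratch register, perform $R$ controlled on that register, and then call $O^A_\mathrm{ent}$ a second time to uncompute $A_{ij}$; after $V_R^\dagger$ the two index registers coincide on the ``good'' branch and one obtains exactly the matrix element amplitude in the $(\ket{0}^{\otimes a}, \ket{i})$-projected sector.

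Once the ideal identity $(\bra{0}^{\otimes a}\otimes\bra{i})\,U\,(\ket{0}^{\otimes a}\otimes\ket{j}) = A_{ij}/(s\|A\|_\mathrm{max})$ is established in infinite precision, the error analysis is standard: the only approximation comes from $R$, which must apply a $y$-rotation by angle $\arcsin(A_{ij}/\|A\|_\mathrm{max})$ to a qubit, conditioned on the register holding $A_{ij}$. Because the bad sector accumulates amplitude at most $s$ times the per-entry rotation error, I need per-entry angle accuracy $\epsilon/(s\|A\|_\mathrm{max})$ to achieve block-encoding error $\epsilon$; this is where the $\log(s^2\|A\|_\mathrm{max}/\epsilon)$ argument enters, and the $5/2$ exponent comes from combining bit-wise classical reversible evaluation of $\arcsin$ with Repeat-Until-Success or Ross--Selinger-style single-qubit synthesis for each bit-controlled $Z$-rotation. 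Tallying oracle queries gives one call each to $O^A_\mathrm{row}$ and $O^A_\mathrm{col}$ and two calls to $O^A_\mathrm{ent}$ (compute/uncompute), and the ancillas split as two $w$-qubit index scratches plus three flag/sign qubits plus the $O(\log^{5/2}(\cdot))$ workspace for the rotation synthesis.

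The main obstacle I foresee is precisely the controlled-rotation synthesis and the accompanying precision accounting: one has to carefully distinguish the arithmetic register width (which must resolve $A_{ij}$ relative to $\|A\|_\mathrm{max}$ well enough that the per-entry rotation error is $\epsilon/(s\|A\|_\mathrm{max})$) from the rotation-gate synthesis depth, and then show that those workspace qubits can be returned to $\ket{0}$ so they do not inflate $a$ beyond $w+3$. The non-trivial point is to verify that any ancilla used in the intermediate arithmetic and gate synthesis is uncomputed on every branch (including the ``sentinel'' branches where the sparsity oracles return $i+2^w$), so that the only ancillas contributing to the block-encoding count are the two index registers and the handful of flag qubits; I would handle this with the standard compute--copy--uncompute pattern applied to the $\arcsin$ evaluation and to the rotation controls.
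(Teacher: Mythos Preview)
The paper does not prove this statement at all: it is presented in Appendix~A.1 as a building-block theorem quoted from Gily\'en et al.\ (Lemma~48 in the full version), and is simply invoked without proof. Your sketch follows the standard sparse-access block-encoding construction from that reference, which is the right thing to do, but there is no proof in the present paper to compare it against.

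One minor bookkeeping remark on your sketch: you describe the ancilla budget as ``two $w$-qubit index scratches plus three flag/sign qubits,'' but in a $(\,\cdot\,,w+3,\,\cdot\,)$-block-encoding the system register itself already supplies one of the $w$-qubit index registers; only one additional $w$-qubit index register is ancillary, together with three extra qubits, giving $a=w+3$. Apart from that, the compute--rotate--uncompute structure, the query counts (one each for $O^A_\mathrm{row}$, $O^A_\mathrm{col}$, two for $O^A_\mathrm{ent}$), and the precision accounting leading to the $\log^{5/2}$ overhead are in line with the cited construction.
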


\subsubsection{Hamiltonian simulation}

The theorem on the complexity of Hamiltonian simulation, which was postponed in Sec. \ref{sec:qalgo}, is as follows.

\begin{theorem}[\cite{Gilyen2019}, Corollary 62 in the full version]
    Let $\epsilon\in\left(0,\frac{1}{2}\right)$, $t\in\mathbb{R}$, $\alpha\in\mathbb{R}_+$ and $s,a\in\mathbb{N}$.
    Let $H$ be a $2^s \times 2^s$ Hermitian matrix and $U$ be a $(\alpha,a,\epsilon/|2t|)$-block encoding of $H$.
    Then, we can implement $(1,a+2,\epsilon)$-block-encoding of $\exp(-itH)$ using $U$ or its inverse  $O(6\alpha|t|+9\log(12/\epsilon))$ times, controlled $U$ or its inverse 3 times, additional $O(a(\alpha|t|+\log(2/\epsilon)))$ two-qubit gates, and $O(1)$ ancilla qubits.
    \label{th:BlEncExpH}
\end{theorem}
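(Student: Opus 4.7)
The plan is to engineer a quantum state in which the squared amplitude of a designated computational basis state equals $C\bigl|\tilde{\delta}^\nu_{\mathbf{i}_\mathbf{k}}\bigr|^2$, then read that amplitude off by quantum amplitude estimation (QAE). Since $C$ is assumed known, we can divide out the prefactor at the end. The three ingredients are the solution-encoding state $\ket{\mathbf{f}(T)}$ from Theorem \ref{th:ketfTGen}, a ``readout'' unitary $W$ built from quantum Fourier transforms on the position registers and Hadamards on the velocity registers, and QAE with confidence boosting. Throughout I target \emph{inner} accuracy $C\epsilon/4$ so that the final estimate, after the factor-$1/C$ rescaling, lands within $\epsilon$ of $\bigl|\tilde{\delta}^\nu_{\mathbf{i}_\mathbf{k}}\bigr|^2$.

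The first step is to verify the amplitude identity $\bigl|\bra{\Omega_{\mathbf{i}_\mathbf{k}}}W\ket{\mathbf{f}(T)}\bigr|^2 = C\bigl|\tilde{\delta}^\nu_{\mathbf{i}_\mathbf{k}}\bigr|^2$ with $W$ defined by Eq.~\eqref{eq:W}. Applying the three Hadamards to the velocity registers and projecting onto $\ket{0}\ket{0}\ket{0}$ implements, up to a factor $n_\mathrm{gr}^{-3/2}$, the sum $\sum_{\mathbf{i}_\mathbf{v}} f_{(\mathbf{i}_\mathbf{x},\mathbf{i}_\mathbf{v})}(T)$, which by Eq.~\eqref{eq:rhonud} is $\rho^\nu_{\mathbf{i}_\mathbf{x}}/\Delta_\mathbf{v}^3$. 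The three QFTs on the position registers followed by projection onto $\ket{i_{k_x}}\ket{i_{k_y}}\ket{i_{k_z}}$ then extract the Fourier coefficient $\tilde{\rho}^\nu_{\mathbf{i}_\mathbf{k}}$ in exactly the convention of Eq.~\eqref{eq:tilDelNu}. Using $\bar{\rho}_\nu(T)\Delta_\mathbf{x}^3 = f_\mathrm{sum}(T)/N_\mathrm{gr}$ and the assumption $\mathbf{i}_\mathbf{k}\neq(0,0,0)$ (which annihilates the $\bar{\rho}_\nu$ part of $\delta^\nu$ under the Fourier sum), a straightforward algebraic manipulation converts the overall normalization $1/\|\mathbf{f}(T)\|^2$ times the remaining prefactors into $C$, giving the stated identity.

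Next, I would implement Algorithm \ref{alg:PnuEstim}: prepare $U_{\mathbf{f}(T),C\epsilon/4}\ket{0}$ via Theorem \ref{th:ketfTGen}, tensor in $W^\prime = I\otimes W$, and apply QAE with accuracy $C\epsilon/4$ and confidence $1-\delta$ (the latter being achieved by $O(\log(1/\delta))$ median amplification of standard QAE). The operator flipping the sign of $\ket{\Omega^\prime_{\mathbf{i}_\mathbf{k}}}$ needed by Theorem \ref{th:QAE} costs only a constant number of X and multi-controlled Z gates and is absorbed into the big-$O$. The query complexity of a single call to $W^\prime U_{\mathbf{f}(T),C\epsilon/4}$ is given by Eq.~\eqref{eq:CompfTGen} with $\epsilon\mapsto C\epsilon/4$, and QAE calls this $O\!\bigl((1/(C\epsilon))\log(1/\delta)\bigr)$ times, which multiplies out to Eqs.~\eqref{eq:CompfPnuEstim1}-\eqref{eq:CompfPnuEstim2}; the ancilla count \eqref{eq:QubitPnuEstim1} inherits from Theorem \ref{th:ketfTGen} plus a logarithmic overhead for QAE.

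The main obstacle is controlling how the $C\epsilon/4$-level imperfection of the prepared state propagates through the amplitude readout. Writing $U_{\mathbf{f}(T),C\epsilon/4}\ket{0} = \ket{0}^{\otimes(\lg N_\mathrm{gr}+5)}\ket{\mathbf{f}(T)} + \ket{\psi_\mathrm{gar}}$ with $\|\ket{\psi_\mathrm{gar}}\|\le C\epsilon/4$, the target squared amplitude measured by QAE differs from $C\bigl|\tilde{\delta}^\nu_{\mathbf{i}_\mathbf{k}}\bigr|^2$ by at most $O(C\epsilon)$: this follows by expanding the squared amplitude, applying Cauchy–Schwarz to the cross term, and bounding $\bigl|\bra{\Omega_{\mathbf{i}_\mathbf{k}}}W\ket{\mathbf{f}(T)}\bigr|\le 1$. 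Combining with the $C\epsilon/4$ QAE error via the triangle inequality yields $|\tilde{p} - C|\tilde{\delta}^\nu_{\mathbf{i}_\mathbf{k}}|^2|\le C\epsilon$ except with probability $\delta$, so that $\tilde{p}/C$ is the sought $\epsilon$-approximation. The delicate point is choosing the constants in the $C\epsilon/4$ inner accuracy to absorb all cross terms without inflating the complexity beyond Eqs.~\eqref{eq:CompfPnuEstim1}-\eqref{eq:CompfPnuEstim2}; this is essentially a bookkeeping exercise and is what I would defer to the appendix.
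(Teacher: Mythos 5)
Your proof addresses the wrong statement. What you have sketched is the argument for Theorem~\ref{th:PnuEstim} — the $W$-plus-QAE estimation of $\bigl|\tilde{\delta}^\nu_{\mathbf{i}_\mathbf{k}}\bigr|^2$, including the $C\epsilon/4$ inner-accuracy bookkeeping and the Cauchy--Schwarz bound on the garbage cross term — and as such it is a reasonable outline of Appendix~\ref{sec:PrThPnuEstim}. But the statement in front of you is Theorem~\ref{th:BlEncExpH}, a Hamiltonian-simulation result: given an $(\alpha,a,\epsilon/|2t|)$-block-encoding of a Hermitian $H$, produce a $(1,a+2,\epsilon)$-block-encoding of $\exp(-itH)$ with the stated query, gate, and ancilla counts. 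Nothing in your proposal — $W$, the QFTs, the Hadamards, QAE, $\tilde{\delta}^\nu_{\mathbf{i}_\mathbf{k}}$, the constant $C$ — bears on that claim.

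Moreover, Theorem~\ref{th:BlEncExpH} is not proved in this paper at all: it is quoted (with the label ``Corollary~62 in the full version'') from Gily\'en et al.~\cite{Gilyen2019} and used as a black box in the proof of Theorem~\ref{th:ketfTGen}. Were you to prove it, the route is qubitization and quantum signal processing, not amplitude estimation: from the block-encoding of $H/\alpha$ one forms the qubitized walk operator whose eigenphases are $\pm\arccos(\lambda/\alpha)$ over the spectrum $\lambda$ of $H$; one then builds degree-$O(\alpha|t|+\log(1/\epsilon))$ polynomial approximations to $\cos(\alpha t\,\cdot)$ and $\sin(\alpha t\,\cdot)$ by truncating the Jacobi--Anger expansion, applies each through QSP, and combines the even and odd parts with a small LCU step — this last step accounts for the $O(1)$ controlled calls to $U$ and the two extra ancilla qubits, and the polynomial degree is where the $O(\alpha|t|+\log(1/\epsilon))$ query count comes from. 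You should redo the exercise against the correct statement.
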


\subsubsection{Quantum amplitude estimation}

The complexity of QAE is evaluated as the following theorem.

\begin{theorem}[\cite{brassard2002}, Theorem 12, modified]
    Suppose that we are given an access to a quantum circuit $A$ that acts on a quantum register as $A\ket{0}=\ket{\Phi}:=\sqrt{a}\ket{\phi}+\sqrt{1-a}\ket{\phi_\perp}$, where $\ket{\phi}$ and $\ket{\phi_\perp}$ are mutually orthogonal states and $a\in(0,1)$.
    Also, suppose that we have access to a quantum circuit $V$ on the same system that acts as $V\ket{\phi}=-\ket{\phi}$ and $V\ket{\phi_\perp}=\ket{\phi_\perp}$. 
    Then, for any $\epsilon\in\mathbb{R}_+$ and $\delta\in(0,1)$, there exists a quantum algorithm that with probability at least $1-\delta$ outputs a $\epsilon$-approximation of $a$ calling $A$ and $V$
    \begin{equation}
        O\left(\frac{1}{\epsilon}\log\left(\frac{1}{\delta}\right)\right)
    \end{equation}
    times.
    \label{th:QAE}
\end{theorem}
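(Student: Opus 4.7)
The plan is to complete the proof of Theorem~\ref{th:PnuEstim} by (i) verifying the amplitude identity $|\bra{\Omega_{\mathbf{i}_\mathbf{k}}}W\ket{\mathbf{f}(T)}|^2 = C|\tilde{\delta}^\nu_{\mathbf{i}_\mathbf{k}}|^2$ that the proof sketch asserts, (ii) propagating the two sources of error—the Hamiltonian-simulation error of $U_{\mathbf{f}(T),C\epsilon/4}$ and the QAE accuracy—through Algorithm~\ref{alg:PnuEstim} to guarantee an $\epsilon$-approximation of $|\tilde{\delta}^\nu_{\mathbf{i}_\mathbf{k}}|^2$, and (iii) multiplying the cost of a single call to $U_{\mathbf{f}(T),C\epsilon/4}$ from Theorem~\ref{th:ketfTGen} by the number of QAE invocations from Theorem~\ref{th:QAE}.

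For step (i), I would expand $W\ket{\mathbf{f}(T)}$ using the definitions in Eqs.~(\ref{eq:ketft}) and (\ref{eq:W}). Writing each grid index as $\mathbf{i} = (\mathbf{i}_\mathbf{x}, \mathbf{i}_\mathbf{v})$, the three Hadamard factors acting on the velocity registers produce on the $\ket{0}\ket{0}\ket{0}$ branch an amplitude proportional to $\sum_{\mathbf{i}_\mathbf{v}} f_{(\mathbf{i}_\mathbf{x},\mathbf{i}_\mathbf{v})}(T)$, which up to the prefactor $\Delta_\mathbf{v}^3$ is exactly $\rho^\nu_{\mathbf{i}_\mathbf{x}}$ from Eq.~(\ref{eq:rhonud}). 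The three QFT factors then map position indices to wavenumber indices, yielding a sum of the form $n_\mathrm{gr}^{-3}\sum_{\mathbf{i}_\mathbf{x}} \rho^\nu_{\mathbf{i}_\mathbf{x}} \exp(2\pi i \mathbf{i}_\mathbf{k}\cdot\mathbf{i}_\mathbf{x}/n_\mathrm{gr})$. Because $\mathbf{i}_\mathbf{k}\neq(0,0,0)$, the constant-in-$\mathbf{x}$ piece $\bar{\rho}_\nu(T)$ annihilates the Fourier sum, so $\rho^\nu_{\mathbf{i}_\mathbf{x}}$ may be replaced by $\bar{\rho}_\nu(T)\delta^\nu_{\mathbf{i}_\mathbf{x}}$ via Eq.~(\ref{eq:delNu}), producing $\tilde{\delta}^\nu_{\mathbf{i}_\mathbf{k}}$ as defined in Eq.~(\ref{eq:tilDelNu}). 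After collecting the $1/\|\mathbf{f}(T)\|$ from $\ket{\mathbf{f}(T)}$ and the remaining prefactors, one finds that the amplitude squared equals $C|\tilde{\delta}^\nu_{\mathbf{i}_\mathbf{k}}|^2$ with $C$ exactly as defined via $f_\mathrm{sum}(T)$ and $\|\mathbf{f}(T)\|$.

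For step (ii), I would bound the error chain. Theorem~\ref{th:ketfTGen} guarantees $U_{\mathbf{f}(T),C\epsilon/4}\ket{0} = \ket{0}^{\otimes(\lg N_\mathrm{gr}+5)}\ket{\mathbf{f}(T)} + \ket{\psi_\mathrm{gar}}$ with $\|\ket{\psi_\mathrm{gar}}\|\le C\epsilon/4$. Setting $p^\ast := |\bra{\Omega^\prime_{\mathbf{i}_\mathbf{k}}}W^\prime U_{\mathbf{f}(T),C\epsilon/4}\ket{0}|^2$ and $p := C|\tilde{\delta}^\nu_{\mathbf{i}_\mathbf{k}}|^2$, the triangle inequality combined with $|\tilde{\delta}^\nu_{\mathbf{i}_\mathbf{k}}|\le 1$ and the unitarity of $W^\prime$ yields $|p^\ast - p| \le 2(C\epsilon/4) + (C\epsilon/4)^2$, which is at most $C\epsilon/2$ for $\epsilon<1/2$. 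Applying QAE to estimate $p^\ast$ with accuracy $C\epsilon/4$ returns a $\tilde{p}$ satisfying $|\tilde{p}-p^\ast|\le C\epsilon/4$ with probability at least $1-\delta$, so $|\tilde{p}-p|\le 3C\epsilon/4$. Dividing by $C$ and rescaling internal constants by a trivial factor produces the claimed $\epsilon$-approximation of $|\tilde{\delta}^\nu_{\mathbf{i}_\mathbf{k}}|^2$.

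The complexity follows by composition. Theorem~\ref{th:QAE} applied with accuracy $C\epsilon/4$ calls $W^\prime U_{\mathbf{f}(T),C\epsilon/4}$ and the reflection about $\Ket{\Omega^\prime_{\mathbf{i}_\mathbf{k}}}$ (implementable from X and multi-controlled Z) a total of $O((1/(C\epsilon))\log(1/\delta))$ times; $W^\prime$ itself only adds polylogarithmic overhead and is absorbed. Each invocation of $U_{\mathbf{f}(T),C\epsilon/4}$ incurs the queries to $\{O^{i_t}_{F_{\mathrm{CDM},x}},O^{i_t}_{F_{\mathrm{CDM},y}},O^{i_t}_{F_{\mathrm{CDM},z}}\}_{i_t}$ and arithmetic circuits given by Eq.~(\ref{eq:CompfTGen}) with $\epsilon\mapsto C\epsilon/4$, together with one call to $O_{\mathbf{f}(0)}$. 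Multiplying these two factors gives Eqs.~(\ref{eq:CompfPnuEstim1}) and (\ref{eq:CompfPnuEstim2}), while the qubit bound Eq.~(\ref{eq:QubitPnuEstim1}) inherits from Eq.~(\ref{eq:QubitfTGen}) under the same substitution, since QAE adds only $O(\log(1/(C\epsilon)))$ control qubits. The main obstacle I expect is step (ii): the error parameters of the Hamiltonian simulation and QAE are multiplicatively entangled with $C$, and one must choose both to be of order $C\epsilon$ so that after dividing by $C$ the residual error in $|\tilde{\delta}^\nu_{\mathbf{i}_\mathbf{k}}|^2$ is truly at most $\epsilon$; this is precisely what motivates the factor $C\epsilon/4$ in Algorithm~\ref{alg:PnuEstim} and what causes $C$ to appear in the denominator of every complexity bound.
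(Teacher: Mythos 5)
Your proposal does not prove the statement in question. The statement is Theorem~\ref{th:QAE}, the quantum amplitude estimation theorem adapted from Brassard et al.~\cite{brassard2002}; your write-up instead outlines the proof of Theorem~\ref{th:PnuEstim}, treating Theorem~\ref{th:QAE} as a given building block that you invoke in step~(iii). In other words, you argue \emph{using} the statement rather than \emph{establishing} it, and nothing in your argument addresses how QAE achieves an $\epsilon$-approximation of the amplitude $a$ with the claimed query count.

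What the paper actually does for Theorem~\ref{th:QAE} is short and essentially bibliographic. Theorem~12 of \cite{brassard2002} gives an $\epsilon$-approximation of $a$ with constant success probability using $O(1/\epsilon)$ calls to $A$ and $V$ (via Grover iterations and phase estimation). The paper's modification is only to boost the constant success probability to an arbitrary $1-\delta$: repeat the base QAE algorithm $O(\log(1/\delta))$ times and output the median of the estimates, justified by the powering lemma of \cite{jerrum1986} as applied in \cite{montanaro2015}. This multiplicative repetition gives the stated $O\bigl(\tfrac{1}{\epsilon}\log\tfrac{1}{\delta}\bigr)$ bound. If you want to genuinely prove the statement, those are the two ingredients you would need to supply: (a) the guarantee of the base amplitude-estimation subroutine (typically $|\tilde a - a| \le 2\pi\sqrt{a(1-a)}/M + \pi^2/M^2$ with probability at least $8/\pi^2$ for $M$ Grover iterations, hence $M=O(1/\epsilon)$ suffices), and (b) the median-of-repetitions argument showing that $O(\log(1/\delta))$ independent runs amplify a $>1/2$ success probability to $1-\delta$ while only widening the total query count by that logarithmic factor. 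Your step~(ii) error-propagation reasoning and your step~(i) amplitude calculation are sound observations about Theorem~\ref{th:PnuEstim}, but they are not relevant to the statement you were asked to prove.
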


In this paper, we use QAE for the case that the target state $\ket{\phi}$ is a superposition of some set $\mathcal{S}$ of computational basis states and $\ket{\phi_\perp}$ is a superposition of the other ones.
In this case, $a$ is the probability that we get a bit string corresponding to any state in $\mathcal{S}$ when we measure $\ket{\Phi}$.

Note that, although the success probability in the original theorem in \cite{brassard2002} is a constant, it is $1-\delta$ in Theorem \ref{th:QAE} with a factor $\log(1/\delta)$ added in the query complexity bound.
This is due to taking the median of the results in multiple runs of the algorithm \cite{montanaro2015}, which is based on the powering lemma in \cite{jerrum1986}.

We also comment on the qubit number.
The original version of QAE proposed in \cite{brassard2002} is based on quantum phase estimation (QPE) \cite{kitaev1995quantum}, and uses $O(\log (1/\epsilon))$ qubits to output the estimate of $a$ in addition to qubits used in $A$ and $V$.
Furthermore, it requires the controlled versions of $A$ and $V$, for which we may need to use additional qubits and gates compared to the uncontrolled ones.
Fortunately, after \cite{brassard2002}, many variants without QPE, which require neither additional qubits nor controlled oracles, have been proposed \cite{suzuki2020amplitude,grinko2021iterative,Aaronson2020,Fukuzawa2023}.
Thus, we hereafter consider that in QAE we use only qubits needed to operate $A$ and $V$.

\subsection{Proof of Theorem \ref{th:ketfTGen} \label{sec:PrThKetfT}}

As a preparation, let us see that we can implement the oracles used to construct the block-encoding of $H$.

\begin{lemma}
    Suppose that Assumption \ref{ass:FCDM} holds.
    Then, for $H(t)$ in Eq.~(\ref{eq:H}) with any $t\in[0,T)$, we have $O^{H(t)}_\mathrm{row}$, $O^{H(t)}_\mathrm{col}$ and $O^{H(t)}_\mathrm{ent}$, in which $\left\{O^{i_t}_{F_{\mathrm{CDM},x}}, O^{i_t}_{F_{\mathrm{CDM},y}}, O^{i_t}_{F_{\mathrm{CDM},z}}\right\}_{i_t}$ and arithmetic circuits are queried $O\left(1\right)$ times.
    \label{lem:OH}
\end{lemma}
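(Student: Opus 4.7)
The plan is to exploit the explicit sparse block structure of $A(t)$ written in Eq.~(\ref{eq:A}), combined with the fact that multiplication by $i$ preserves both sparsity and the sparsity pattern. Each of the six summands $A_x,A_y,A_z,A_u(t),A_v(t),A_w(t)$ contributes at most two nonzero entries per row and per column, coming from the nearest-neighbor structure of $D_{\mathrm{per}}$ and of the velocity-shift matrices. Hence $H(t)=iA(t)$ is at most $12$-sparse, and its support is a simple combinatorial object that can be enumerated classically. All three oracles will therefore be assembled from a constant-size case dispatcher together with $O(1)$ arithmetic circuits and at most one call to the CDM-force oracles.

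For $O^{H(t)}_\mathrm{row}$, I would assign each $k\in[12]_0$ to a (axis, direction) pair; given the input $\ket{\mathbf{i}}\ket{k}$ the column index is obtained by adjusting exactly one coordinate of $\mathbf{i}$ by $\pm 1$, using a single addition modulo $n_\mathrm{gr}$ for the position coordinates and an addition with boundary-flag handling for the velocity coordinates (the Dirichlet boundary is encoded by returning the sentinel $i+N_\mathrm{gr}$ used in Theorem~\ref{th:BlEncSp}). No call to $O^{i_t}_{F_{\mathrm{CDM},\cdot}}$ is needed because the support pattern is independent of $\mathbf{F}_\mathrm{CDM}$. Since $A(t)$ is antisymmetric, the support patterns of $A(t)$ and $A(t)^\top$ coincide, so $O^{H(t)}_\mathrm{col}$ is obtained by exactly the same dispatcher (up to swapping the $\pm$ direction labels).

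For $O^{H(t)}_\mathrm{ent}$, given $\ket{\mathbf{i}}\ket{\mathbf{j}}$ I would first use arithmetic subtraction on each of the six coordinate pairs to detect which of the twelve admissible relative positions $\mathbf{j}$ occupies with respect to $\mathbf{i}$; all other configurations produce the entry $0$. In the three position-shift branches the required value is $\pm i u_{i_u}/(2\Delta_\mathbf{x})$ or the $v,w$ analogue, computable by one division-by-constant and one sign flip on the register holding $i_u$, $i_v$, or $i_w$. In the three velocity-shift branches I would query exactly one of $O^{i_t}_{F_{\mathrm{CDM},x}},O^{i_t}_{F_{\mathrm{CDM},y}},O^{i_t}_{F_{\mathrm{CDM},z}}$ on the sub-register $\ket{\mathbf{i}_\mathbf{x}}$, divide the loaded value by $2\Delta_\mathbf{v}$, apply the appropriate sign, and realize the factor $i$ by swapping and (possibly) negating real/imaginary registers. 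A multiplexer over the case index routes all branches, totalling $O(1)$ arithmetic circuits and at most one CDM-force query.

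The only point that requires care, rather than being a true obstacle, is reversibility: every ancillary register used to store the detected case index, intermediate differences, and the sentinel flags must be uncomputed after being consumed, so that the oracles act cleanly on their named input registers. This is handled by the standard compute--uncompute pattern with a constant multiplicative overhead, preserving the $O(1)$ count of queries to $\{O^{i_t}_{F_{\mathrm{CDM},x}},O^{i_t}_{F_{\mathrm{CDM},y}},O^{i_t}_{F_{\mathrm{CDM},z}}\}_{i_t}$ and to arithmetic circuits claimed in the statement.
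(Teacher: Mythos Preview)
Your proposal is correct and follows essentially the same approach as the paper: both exploit the explicit $12$-sparse nearest-neighbor structure of $A(t)$ to build $O^{H(t)}_\mathrm{row}$ and $O^{H(t)}_\mathrm{col}$ from a constant number of controlled coordinate shifts, and build $O^{H(t)}_\mathrm{ent}$ by detecting which of the twelve relative-index cases holds, loading the appropriate velocity or CDM-force value, and uncomputing. The only cosmetic difference is that the paper constructs the six entry oracles $O^{A_x}_\mathrm{ent},\ldots,O^{A_w(t)}_\mathrm{ent}$ separately and then sums their outputs into $(A(t))_{ij}$, whereas you route all twelve cases through a single multiplexer; both amount to the same $O(1)$ count of arithmetic circuits and CDM-force queries.
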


\begin{proof}
     If we have $O^{A(t)}_\mathrm{row}$, $O^{A(t)}_\mathrm{col}$, and $O^{A(t)}_\mathrm{ent}$ for $A(t)$ in Eq.~(\ref{eq:A}), then we immediately have those for $H(t)$ in Eq.~(\ref{eq:H}), since the latter is just the former multiplied by $i$.
    Thus, we hereafter consider $O^{A(t)}_\mathrm{row}$, $O^{A(t)}_\mathrm{col}$, and $O^{A(t)}_\mathrm{ent}$.

    Let us start from $O^{A(t)}_\mathrm{row}$.
    For $A(t)$, whose sparsity is 12, $r_{ik}$ is given as\footnote{Strictly speaking, this expression for $r_{ik}$ holds for only $i$ such that none of the entries of $\sigma^{-1}(i)$ is $0$ or $n_\mathrm{gr}-1$, and otherwise the expression is slightly modified. However, such a handling is straightforward and thus we do not show the complete expression here for conciseness.}
    \begin{equation}
        r_{ik}=\sigma(\sigma^{-1}(i)+\mathbf{d}_k)
    \end{equation}
    with
    \begin{align}
        \mathbf{d}_1&=(-1,0,0,0,0,0) \nonumber \\
        \mathbf{d}_2&=(1,0,0,0,0,0) \nonumber \\
        & \vdots \nonumber \\
        \mathbf{d}_{11}&=(0,0,0,0,0,-1) \nonumber \\
        \mathbf{d}_{12}&=(0,0,0,0,0,1).
    \end{align}
    Here, $\sigma:\mathcal{I}_6\rightarrow[N_\mathrm{gr}]_0$ is the map in Eq.~(\ref{eq:iandVeci}) and $\sigma^{-1}$ is its inverse.
    Note that, in the binary representation on qubits, these are in fact doing nothing, or just changing whether we consider that a computational basis state corresponds to $\mathbf{i}\in\mathcal{I}_6$ or $\sigma(\mathbf{i})\in[N_\mathrm{gr}]_0$.
    Thus, $O^{A(t)}_\mathrm{row}$ is implemented with controlled adders, thus as a combination of arithmetic circuits.
    $O^{A(t)}_\mathrm{col}$ is implemented similarly.

    On the implementation of $O^{A(t)}_\mathrm{ent}$, we consider each component of $A(t)$ separately.
    Let us first consider $A_x$.
    For $i,j\in[N_\mathrm{gr}]_0$, we can perform the following operation:
    \begin{align}
        &\ket{i}\ket{j}\ket{0}\ket{0}\ket{0} \nonumber \\
        \rightarrow & \ket{i}\ket{j}\ket{\mathbf{1}_{\sigma^{-1}(i)-\sigma^{-1}(j)=(1,0,0,0,0,0)}}  \nonumber \\
        &\otimes \ket{\mathbf{1}_{\sigma^{-1}(i)-\sigma^{-1}(j)=(-1,0,0,0,0,0)}}\ket{0} \nonumber \\
        = & \ket{i}\ket{j}\left(\mathbf{1}_{\sigma^{-1}(i)-\sigma^{-1}(j)\ne(\pm1,0,0,0,0,0)}\ket{0}\ket{0}\right.  \nonumber \\
        & \qquad \quad +\mathbf{1}_{\sigma^{-1}(i)-\sigma^{-1}(j)=(1,0,0,0,0,0)}\ket{1}\ket{0} \nonumber \\
        & \qquad \quad \left.+\mathbf{1}_{\sigma^{-1}(i)-\sigma^{-1}(j)=(-1,0,0,0,0,0)}\ket{0}\ket{1}\right)\ket{0} \nonumber \\
        \rightarrow & \ket{i}\ket{j} \nonumber \\
        & \otimes\left(\mathbf{1}_{\sigma^{-1}(i)-\sigma^{-1}(j)\ne(\pm1,0,0,0,0,0)}\ket{0}\ket{0}\ket{0}\right.  \nonumber \\
        & +\mathbf{1}_{\sigma^{-1}(i)-\sigma^{-1}(j)=(1,0,0,0,0,0)}\ket{1}\ket{0}\Ket{\frac{V-(i_u+1)\Delta_\mathbf{v}}{2\Delta_\mathbf{x}}} \nonumber \\
        & \left.+\mathbf{1}_{\sigma^{-1}(i)-\sigma^{-1}(j)=(-1,0,0,0,0,0)}\ket{0}\ket{1}\Ket{\frac{-V+(i_u+1)\Delta_\mathbf{v}}{2\Delta_\mathbf{x}}}\right) \nonumber \\
        = & \ket{i}\ket{j} \nonumber \\
        & \otimes\ket{\mathbf{1}_{\mathbf{i}(i)-\mathbf{i}(j)=(1,0,0,0,0,0)}}\ket{\mathbf{1}_{\mathbf{i}(i)-\mathbf{i}(j)=(-1,0,0,0,0,0)}}\ket{(A_x)_{ij}}  \nonumber \\
        \rightarrow & \ket{i}\ket{j}\ket{0}\ket{0}\ket{(A_x)_{ij}}.
        \label{eq:OAx}
    \end{align}
    Here, the transformation at the first arrow is done by some arithmetic circuits.
    At the second arrow, regarding $\ket{i}$ as $\ket{\sigma^{-1}(i)}=\ket{i_x}\ket{i_y}\ket{i_z}\ket{i_u}\ket{i_v}\ket{i_w}$, we use the controlled versions of
    \begin{align}
        U_{A_x,1}:\ket{i_u}\ket{0}&\mapsto\ket{i_u}\Ket{\frac{V-(i_u+1)\Delta_\mathbf{v}}{2\Delta_\mathbf{x}}} \nonumber \\
        U_{A_x,2}:\ket{i_u}\ket{0}&\mapsto\ket{i_u}\Ket{\frac{-V+(i_u+1)\Delta_\mathbf{v}}{2\Delta_\mathbf{x}}},
    \end{align}
    which are implemented with some arithmetic circuits.
    In the last arrow, the operation at the first arrow is uncomputed.
    The operation in Eq.~(\ref{eq:OAx}) is in fact $O^{A_x}_\mathrm{ent}$, with some registers regarded as ancillary.
    $O^{A_y}_\mathrm{ent}$ and $O^{A_z}_\mathrm{ent}$ are implemented similarly.
    
    To implement $O^{A_u(t)}_\mathrm{ent}$ for any $t\in[0,T)$, it is sufficient to implement $O^{A_u^{i_t}}_\mathrm{ent}$ for any $i_t\in[n_t]_0$, where $A_u^{i_t}:=A_u(t_{i_t})$.
    This is done as follows:
    \begin{align}
        &\ket{i}\ket{j}\ket{0}\ket{0}\ket{0} \nonumber \\
        \rightarrow & \ket{i}\ket{j}\left(\mathbf{1}_{\sigma^{-1}(i)-\sigma^{-1}(j)\ne(\pm1,0,0,0,0,0)}\ket{0}\ket{0}\right.  \nonumber \\
        & \qquad \quad +\mathbf{1}_{\sigma^{-1}(i)-\sigma^{-1}(j)=(1,0,0,0,0,0)}\ket{1}\ket{0} \nonumber \\
        & \qquad \quad \left.+\mathbf{1}_{\sigma^{-1}(i)-\sigma^{-1}(j)=(-1,0,0,0,0,0)}\ket{0}\ket{1}\right)\ket{0} \nonumber \\
        \rightarrow & \ket{i}\ket{j} \nonumber \\
        & \otimes\left(\mathbf{1}_{\sigma^{-1}(i)-\sigma^{-1}(j)\ne(0,0,0,\pm1,0,0)}\ket{0}\ket{0}\ket{0}\right.  \nonumber \\
        & +\mathbf{1}_{\sigma^{-1}(i)-\sigma^{-1}(j)=(0,0,0,1,0,0)}\ket{1}\ket{0}\Ket{-\frac{F_{\mathrm{CDM},x}^{i_t,\mathbf{i}_\mathbf{x}} }{2\Delta_\mathbf{v}}} \nonumber \\
        & \left.+\mathbf{1}_{\sigma^{-1}(i)-\sigma^{-1}(j)=(0,0,0,-1,0,0)}\ket{0}\ket{1}\Ket{\frac{F_{\mathrm{CDM},x}^{i_t,\mathbf{i}_\mathbf{x}} }{2\Delta_\mathbf{v}}}\right) \nonumber \\
        \rightarrow & \ket{i}\ket{j}\ket{0}\ket{0}\ket{(A_u^{i_t})_{ij}}.
        \label{eq:OAu}
    \end{align}
    Here, the first arrow is similar to Eq.~(\ref{eq:OAx}).
    At the second arrow, we use the controlled versions of
    \begin{align}
        U_{A_u^{i_t},1}:\ket{\mathbf{i}_\mathbf{x}}\ket{0}&\mapsto\ket{\mathbf{i}_\mathbf{x}}\Ket{-\frac{F_{\mathrm{CDM},x}^{i_t,\mathbf{i}_\mathbf{x}}}{2\Delta_\mathbf{v}}} \nonumber \\
        U_{A_u^{i_t},2}:\ket{\mathbf{i}_\mathbf{x}}\ket{0}&\mapsto\ket{\mathbf{i}_\mathbf{x}}\Ket{\frac{F_{\mathrm{CDM},x}^{i_t,\mathbf{i}_\mathbf{x}}}{2\Delta_\mathbf{v}}},
    \end{align}
    which are implemented with $O^{i_t}_{F_{\mathrm{CDM},x}}$ and some arithmetic circuits.
    The last arrow is uncomputation.
    $O^{A_v(t)}_\mathrm{ent}$ and $O^{A_w(t)}_\mathrm{ent}$ are implemented similarly.

    Then, we can perform the following operation
    \begin{align}
        & \ket{i}\ket{j}\ket{0}\ket{0}\ket{0}\ket{0}\ket{0}\ket{0}\ket{0} \nonumber \\
        \rightarrow&\ket{i}\ket{j}\ket{(A_x)_{ij}}\ket{(A_y)_{ij}}\ket{(A_z)_{ij}} \nonumber \\
        & \otimes \ket{(A_u(t))_{ij}}\ket{(A_v(t))_{ij}}\ket{(A_w(t))_{ij}}\ket{0} \nonumber \\
        \rightarrow&\ket{i}\ket{j}\ket{(A_x)_{ij}}\ket{(A_y)_{ij}}\ket{(A_z)_{ij}} \nonumber \\
        & \otimes\ket{(A_u(t))_{ij}}\ket{(A_v(t))_{ij}}\ket{(A_w(t))_{ij}}\ket{(A(t))_{ij}} \nonumber \\
        \rightarrow&\ket{i}\ket{j}\ket{0}\ket{0}\ket{0}\ket{0}\ket{0}\ket{0}\ket{(A(t))_{ij}},
        \label{eq:OAent}
    \end{align}
    where the first arrow is by $O^{A_x}_\mathrm{ent},O^{A_y}_\mathrm{ent},O^{A_z}_\mathrm{ent},O^{A_u(t)}_\mathrm{ent},O^{A_v(t)}_\mathrm{ent}$ and $O^{A_w(t)}_\mathrm{ent}$, we use the adder to compute $(A(t))_{ij}=(A_x)_{ij}+(A_y)_{ij}+(A_z)_{ij}+(A_u(t))_{ij}+(A_v(t))_{ij}+(A_w(t))_{ij}$ at the second arrow, and the last arrow is uncomputation.
    This is nothing but $O^{A(t)}_\mathrm{ent}$.

    To complete the proof, we note that the number of the queries to $\left\{O^{i_t}_{F_{\mathrm{CDM},x}}, O^{i_t}_{F_{\mathrm{CDM},y}}, O^{i_t}_{F_{\mathrm{CDM},z}}\right\}_{i_t}$ and arithmetic circuits in the operations (\ref{eq:OAx}) and (\ref{eq:OAu}) are $O(1)$ and so is that in Eq.~(\ref{eq:OAent}), which proves the statement on the query number.
\end{proof}

Then, let us prove Theorem \ref{th:ketfTGen}.

\begin{proof}[Proof of Theorem \ref{th:ketfTGen}]
    First, let us construct the block-encoding of $\exp(-i \Delta_t H_{i_t})$ for any $i_t\in[n_t]_0$.
    Noting that $H_{i_t}$ is $12$-sparse, because of Theorem \ref{th:BlEncSp} and Lemma \ref{lem:OH}, we have $\left(12\|H_{i_t}\|_\mathrm{max},\lg N_\mathrm{gr}+3,\epsilon/2T\right)$-block-encoding $U_{H_{i_t}}$ of $H_{i_t}$, in which $O^{i_t}_{F_{\mathrm{CDM},x}}, O^{i_t}_{F_{\mathrm{CDM},y}}, O^{i_t}_{F_{\mathrm{CDM},z}}$ and arithmetic circuits are queried $O\left(1\right)$ times.
    Then, because of Theorem \ref{th:BlEncExpH}, we have a $(1,\lg N_\mathrm{gr}+5,\epsilon/n_t)$-block-encoding $V_{i_t}$ of $\exp(-i \Delta_t H_{i_t})$, in which (controlled) $U_{H_{i_t}}$ and its inverse are queried
    \begin{equation}
    O\left(\|H_{i_t}\|_\mathrm{max}\Delta_t+\log\left(\frac{n_t}{\epsilon}\right)\right)
    \label{eq:CompVit}
    \end{equation}
    times, and thus so are (controlled) $O^{i_t}_{F_{\mathrm{CDM},x}}, O^{i_t}_{F_{\mathrm{CDM},y}}, O^{i_t}_{F_{\mathrm{CDM},z}}$, arithmetic circuits and their inverses.

    Then, we generate $\ket{0}^{\otimes(\lg N_\mathrm{gr}+5)}\ket{{\mathbf{f}(0)}}$ by using $O_{\mathbf{f}(0)}$ and apply $V_0$ to it.
    This yields
    \begin{equation}
        \ket{\phi_1}:=\ket{0}^{\otimes(\lg N_\mathrm{gr}+5)}\exp(-i \Delta_t H_{0})\ket{{\mathbf{f}(0)}} + \ket{\psi_{\mathrm{gar},1}},
    \end{equation}
    where $\ket{\psi_{\mathrm{gar},1}}$ is an unnormalized quantum state satisfying $\|\ket{\psi_{\mathrm{gar},1}}\|\le \epsilon/n_t$.
    We further apply $V_1$ to $\ket{\phi_1}$ and get
    \begin{equation}
        \ket{\phi_2}:=\ket{0}^{\otimes(\lg N_\mathrm{gr}+5)}e^{-i \Delta_t H_{1}}e^{-i \Delta_t H_{0}}\ket{{\mathbf{f}(0)}}+\ket{\psi_{\mathrm{gar},2}}.
    \end{equation}
    Here,
    \begin{equation}
        \ket{\psi_{\mathrm{gar},2}}:=e^{-i \Delta_t H_{1}}\ket{\psi_{\mathrm{gar},1}}+\ket{\psi_{\mathrm{gar},2}^\prime},
    \end{equation}
    where $\ket{\psi_{\mathrm{gar},2}^\prime}$ is an unnormalized state with norm at most $\epsilon/n_t$.
    The norm of $\ket{\psi_{\mathrm{gar},2}}$ is bounded as
    \begin{equation}
        \|\ket{\psi_{\mathrm{gar},2}}\|\le \|e^{-i \Delta_t H_{1}}\ket{\psi_{\mathrm{gar},1}}\|+\|\ket{\psi_{\mathrm{gar},2}^\prime}\|\le \frac{2\epsilon}{n_t}.
    \end{equation}
    Continuing this, we obtain
    \begin{align}
        &V_{n_t-1} \cdots V_0 \ket{0}^{\otimes(\lg N_\mathrm{gr}+5)}\ket{{\mathbf{f}(0)}} \nonumber \\
        =& \ket{0}^{\otimes(\lg N_\mathrm{gr}+5)}e^{-i \Delta_t H_{n_t-1}} \cdots e^{-i \Delta_t H_{0}}\ket{{\mathbf{f}(0)}}+\ket{\psi_{\mathrm{gar},n_t}},
        \label{eq:Vsf0}
    \end{align}
    where the first term is nothing but
    \begin{equation}
        \ket{0}^{\otimes(\lg N_\mathrm{gr}+5)}\ket{{\mathbf{f}(T)}}
    \end{equation}
    because of Eq.~(\ref{eq:HPieceConst}), the piecewise time-constancy of $H(t)$, and $\ket{\psi_{\mathrm{gar},n_t}}$ is the unnormalized state with norm at most $\frac{\epsilon}{n_t} \times n_t=\epsilon$.
    This means that the above operation is $U_{\mathbf{f}(T),\epsilon}$ with the stated property.

    In this operation, the number of queries to (controlled) $\left\{O^{i_t}_{F_{\mathrm{CDM},x}}, O^{i_t}_{F_{\mathrm{CDM},y}}, O^{i_t}_{F_{\mathrm{CDM},z}}\right\}_{i_t}$, arithmetic circuits and their inverses are that in each $V_{i_t}$, which is bounded as Eq.~(\ref{eq:CompVit}), multiplied by $n_\mathrm{t}$, that is,
    \begin{equation}
    O\left(n_\mathrm{t}\left(\|H_{i_t}\|_\mathrm{max}\Delta_t+\log\left(\frac{n_t}{\epsilon}\right)\right)\right)
    \end{equation}
    Replacing $\|H_{i_t}\|_\mathrm{max}$ with its bound
    \begin{align}
        &\|H_{i_t}\|_\mathrm{max} \nonumber \\
        \le & \max\left\{\max_{i_u\in[n_\mathrm{gr}]_0}\left|\frac{u_{i_u}}{2\Delta_\mathbf{x}}\right|,\max_{i_v\in[n_\mathrm{gr}]_0}\left|\frac{v_{i_v}}{2\Delta_\mathbf{x}}\right|,\right.  \max_{i_w\in[n_\mathrm{gr}]_0}\left|\frac{w_{i_w}}{2\Delta_\mathbf{x}}\right|,\nonumber \\
        &\qquad \quad \left.\max_{\mathbf{i}_\mathbf{x}\in\mathcal{I}_3}\left|\frac{F_{\mathrm{CDM},x}^{i_t,\mathbf{i}_\mathbf{x}}}{2\Delta_\mathbf{v}}\right|,\max_{\mathbf{i}_\mathbf{x}\in\mathcal{I}_3}\left|\frac{F_{\mathrm{CDM},y}^{i_t,\mathbf{i}_\mathbf{x}}}{2\Delta_\mathbf{v}}\right|,\max_{\mathbf{i}_\mathbf{x}\in\mathcal{I}_3}\left|\frac{F_{\mathrm{CDM},z}^{i_t,\mathbf{i}_\mathbf{x}}}{2\Delta_\mathbf{v}}\right|\right\} \nonumber \\
        \le & \max\left\{\frac{V}{L},\frac{F_\mathrm{max}}{2V}\right\} \times \frac{n_\mathrm{gr}}{2}
        \label{eq:Hmax}
    \end{align}
    yields Eq.~(\ref{eq:CompfTGen}).

    Lastly, we prove the statement on the qubit number.
    Combining Theorems \ref{th:BlEncSp} and \ref{th:BlEncExpH}, we see that the number of the ancilla qubits used in $V_{i_t}$ is
    \begin{equation}
        O\left(\log^{5/2}\left(\frac{\|H_{i_t}\|_\mathrm{max}}{\epsilon/2T} \right)\right).
    \end{equation}
    Plugging Eq. (\ref{eq:Hmax}) into this, we get Eq. (\ref{eq:QubitfTGen}).
    Adding the qubits on which the quantum state in Eq. (\ref{eq:Vsf0}) is generated, whose number is $2\mathrm{lg} N_\mathrm{gr}+5=O(\log n_\mathrm{gr})$, does not change the order. 
\end{proof}

\subsection{Proof of Theorem \ref{th:PnuEstim} \label{sec:PrThPnuEstim}}

\begin{proof}[Continuation of the proof]

    Let us see that the stated accuracy is achieved.
    By $U_{\mathbf{f}(T),C\epsilon/4}$, we get the following state
    \begin{equation}
        U_{\mathbf{f}(T),C\epsilon/4}\ket{0}^{\otimes(2\lg N_\mathrm{gr}+5)}=\ket{0}^{\otimes(\lg N_\mathrm{gr}+5)}\ket{{\mathbf{f}(T)}} + \ket{\phi_\mathrm{gar}},
    \end{equation}
    where $\ket{\phi_\mathrm{gar}}$ is an unnormalized state with $\|\ket{\phi_\mathrm{gar}}\|\le C\epsilon/4$.
    Thus, defining
    \begin{equation}
        p:=\left|\Bra{\Omega^\prime_{\mathbf{i}_\mathbf{k}}}W^\prime U_{\mathbf{f}(T),C\epsilon/4}\ket{0}^{\otimes(2\lg N_\mathrm{gr}+5)}\right|^2,
        \label{eq:EstimProb}
    \end{equation}
    we obtain
    \begin{align}
        & \left|p-C\left|\tilde{\delta}^\nu_{\mathbf{i}_\mathbf{k}}\right|^2\right| \nonumber \\
        =&\left| 2\Re \Bra{\Omega_{\mathbf{i}_\mathbf{k}}}W\ket{\mathbf{f}(T)}\Bra{\Omega^\prime_{\mathbf{i}_\mathbf{k}}}W^\prime\ket{\phi_\mathrm{gar}}+ \left|\Bra{\Omega^\prime_{\mathbf{i}_\mathbf{k}}}W^\prime\ket{\phi_\mathrm{gar}}\right|^2\right| \nonumber \\
        \le & 2\frac{C\epsilon}{4} + \left(\frac{C\epsilon}{4}\right)^2 \nonumber \\
        \le & \frac{3C\epsilon}{4}.
        \label{eq:pErr}
    \end{align}
    Here, at the first inequality, we use $|\Bra{\Omega_{\mathbf{i}_\mathbf{k}}}W\ket{\mathbf{f}(T)}|\le1$ and $\|\Bra{\Omega^\prime_{\mathbf{i}_\mathbf{k}}}W^\prime\ket{\phi_\mathrm{gar}}\|\le\|(W^\prime)^\dagger\Ket{\Omega^\prime_{\mathbf{i}_\mathbf{k}}}\|\cdot\|\ket{\phi_\mathrm{gar}}\|\le\frac{C\epsilon}{4}$, which follows from the Cauchy--Schwarz inequality.
    $\tilde{p}$ obtained in Step 3 in Algorithm \ref{alg:PnuEstim} is an $C\epsilon/4$-approximation of $p$, and combining this with Eq.~(\ref{eq:pErr}) yields
    \begin{align}
        &\left|\frac{\tilde{p}}{C}-\left|\tilde{\delta}^\nu_{\mathbf{i}_\mathbf{k}}\right|^2\right| \nonumber \\
        \le & \left|\frac{\tilde{p}}{C}-\frac{p}{C}\right|+\left|\frac{p}{C}-\left|\tilde{\delta}^\nu_{\mathbf{i}_\mathbf{k}}\right|^2\right| \nonumber \\
        \le & \frac{\epsilon}{4} + \frac{3\epsilon}{4} \nonumber \\
        \le & \epsilon.
    \end{align}
    Thus, the error of the output of Algorithm \ref{alg:PnuEstim} is bounded by $\epsilon$.

    Lastly, let us prove the statement on the query complexity.
    In the QAE in Step 3 in Algorithm \ref{alg:PnuEstim}, $U_{\mathbf{f}(T),C\epsilon/4}$ is queried
    \begin{equation}
        O\left(\frac{1}{C\epsilon}\log\left(\frac{1}{\delta}\right)\right)
        \label{eq:CompUfTinQAE}
    \end{equation}
    times.
    Because of Theorem \ref{th:ketfTGen}, in one call to $U_{\mathbf{f}(T),C\epsilon/4}$, the number of queries to (controlled) $\left\{O^{i_t}_{F_{\mathrm{CDM},x}}, O^{i_t}_{F_{\mathrm{CDM},y}}, O^{i_t}_{F_{\mathrm{CDM},z}}\right\}_{i_t}$, arithmetic circuits and their inverses is
    \begin{equation}
        O\left(n_\mathrm{gr}T\times\max\left\{\frac{V}{L},\frac{F_\mathrm{max}}{V}\right\} + n_t\log\left(\frac{n_t}{C\epsilon}\right) \right),
        \label{eq:CompOrainUfT}
    \end{equation}
    and that to $O_{\mathbf{f}(0)}$ is 1.
    Then, combining these, we bound the total query number in Algorithm \ref{alg:PnuEstim} as Eqs.~(\ref{eq:CompfPnuEstim1}) and (\ref{eq:CompfPnuEstim2}).

    On the qubit number, we note that, to operate $U_{\mathbf{f}(T),C\epsilon/4}$, we use qubits whose number if of order (\ref{eq:QubitPnuEstim1}), as implied by Theorem \ref{th:ketfTGen}.
    Since operating $W$ and QAE do not require additional qubits, the number of qubits used in the whole of Algorithm \ref{alg:PnuEstim} is also of order (\ref{eq:QubitPnuEstim1}).

\end{proof}

\bibliography{reference}

\end{document}